\def\uball{\mathbb{B}}
\def\reals{\mathbb{R}}
\def\comp{\raise 1pt \hbox{$\scriptstyle\circ$}}
\def\maximize{\mathop{\rm maximize}\limits}
\def\st{\mathop{\rm subject\ to}}
\def\dom{\mathop{\rm dom}}
\def\upto{{\raise 1pt \hbox{$\scriptstyle \,\nearrow\,$}}}
\def\downto{{\raise 1pt \hbox{$\scriptstyle \,\searrow\,$}}}
\def\inte{\mathop{\rm int}\nolimits}
\def\ovr{\mathop{\rm over}}
\newtheorem{theorem}{Theorem}
\newtheorem{lemma}[theorem]{Lemma}
\newtheorem{corollary}[theorem]{Corollary}
\newtheorem{example}[theorem]{Example}
\newtheorem{remark}[theorem]{Remark}
\newtheorem{assumption}{Assumption}
\title{Efficient allocations in double auction markets\footnote{The author is grateful to professor Sjur D.\ Fl{\aa}m for many fruitful discussions and, in particular, for proposing the growth condition in Assumption~\ref{ass:1} to guarantee that the consumer surplus converges to zero.}}
\author{Teemu Pennanen\footnote{Department of Mathematics, King's College London, teemu.pennanen@kcl.ac.uk}
}
\begin{document}

\maketitle

\begin{abstract}
This paper proposes a simple descriptive model of discrete-time double auction markets for divisible assets. As in the classical models of exchange economies, we consider a finite set of agents described by their initial endowments and preferences. Instead of the classical Walrasian-type market models, however, we assume that all trades take place in a centralized double auction where the agents communicate through sealed limit orders for buying and selling. We find that, under nonstrategic bidding, the double auction clears with zero trades precisely when the agents' current holdings are on the Pareto frontier. More interestingly, the double auctions implement Adam Smith's ``invisible hand'' in the sense that, when starting from disequilibrium, repeated double auctions lead to a sequence of allocations that converges to individually rational Pareto allocations.
\end{abstract}

\noindent\textbf{Keywords.} Double auction, price formation, convergence, Pareto allocation

\section{Introduction}

Most modern securities exchanges are based on the double auction mechanism where interested buyers and sellers submit limit orders (a market order can be viewed as a limit order with a very generous limit on the price) and the most generous offers are selected for trade by crossing the supply and demand curves. The same principle is behind day-ahead energy markets where matching takes into account also the transmission capacities and the locations of supply and demand. More recently, double auctions have been implemented for betting and for various crypto currencies.

Ever since the pioneering works of Smith~\cite{smi62}, double auctions have been found to converge quickly to efficient allocations but the phenomenon has remained largely unexplained by theory; see e.g.\ \cite{cf96} and \cite[Section~3]{smi10} or the collections \cite{smi91}, \cite{fr93} and \cite{ps8} for further evidence and analysis. Section~3 of \cite{fri93} surveys mathematical models proposed for the analysis of double auctions. 
In the words of Plott \cite[page~16]{ps8}: ``The tendency of double auction markets to converge to the equilibrium of the associated competitive equilibrium model is well known, but the equilibration process is not understood''.

This paper studies discrete-time double auctions in the classical set-up of welfare economics with a finite set of agents with given endowments and preferences. We assume that all trades take place in a double auction and that the agents submit limit orders according to their indifference (also known as reservation) prices. In general, indifference prices depend not only on the agents' preferences but also on their current endowments which change whenever an agent is involved in a trade. We find that, when the double auction is repeated, the allocations converge to a Pareto allocation that every agent prefers to their original allocation. Moreover, the speed of convergence is linear in the sense that the total consumer surplus is inversely proportional to the number of iterations. 
As essentially proved already by Debreu~\cite{deb54} (see Section~\ref{sec:pareto} below), the surplus is zero if and only if the current allocation is Pareto efficient. The discovered convergence rate thus explains the efficiency of double auctions observed in empirical studies. The obtained convergence result is, to the best of our knowledge, the first analytical justification for the efficiency of the double auction mechanism.

All trading in our model occurs out of equilibrium and the trading stops only at equilibrium. This is not only a feature of our model but also of real markets where disequilibrium is the driving force behind trading; see \cite[Chapter~1]{ps8} or Fisher~\cite{fis83} for a comprehensive discussion and further analysis of disequilibrium economics. Disequilibrium trading can be described also by tatonnement or auction algorithms where, at each iteration, agents can trade arbitrary quantities at given prices which are then updated according to given rule depending on total consumption; see e.g.~\cite[Chapter~2]{fis83} or \cite{gk6,cf2008} for more recent variants with stronger convergence properties. Such algorithms should not, however, be taken as descriptions of real markets where trading costs are nonlinear and price formation is endogenous. One can, on the other hand, view our market model as an algorithm for computing equilibrium allocations. The constructed equilibria are not necessarily Walrasian but merely individually rational allocations on the Pareto frontier. Accordingly, our assumptions on the utilities are weaker and, in particular, do not require the gross substitutes properties often employed in the literature; see e.g.~\cite{gk6,cf2008}.

The advertised convergence occurs when preferences of the agents remain fixed. In practice, however, agents' preferences evolve with the receipt of new information. When the preferences change, so do equilibria and associated equilibrium prices. Combined with a description of how information affects preferences (see Example~\ref{ex:idu} below), our model would give a natural description of how arrival of new information affects trading and market clearing prices in markets based on the double auction mechanism. 

The rest of this paper is organized as follows. The next section starts by reviewing the double auction mechanism as implemented in terms of limit orders in modern securities exchanges. The auction is then formulated in terms the problem of maximizing the consumer surplus. Section~\ref{sec:id} presents our model of the market where a finite set of agents is described by their endowments and preferences over different holdings. Section~\ref{sec:pareto} relates double auction equilibria with Pareto allocations. Section~\ref{sec:ra} proves convergence to Pareto allocations when the double auction is repeated.

\section{Limit orders and double auctions}\label{sec:loda}

Consider a centralized exchange based on the {\em sealed bid double auction} mechanism where a finite set $I$ of agents submit {\em limit orders} to buy and sell a given asset/good. A buy-limit order consists of a {\em price-quantity} pair $(p^b_i,q^b_i)$ where $p^b_i$ is the maximum unit price the agent is willing to pay for the asset and $q^b_i$ is the maximum number of units the agent is willing to buy at this price. Similarly, a sell-limit order $(p^s_i,q^s_i)$ specifies the minimum unit-price and the maximum quantity for selling the asset. An agent interested only in buying or selling would have $q^s_i=0$ or $q^b_i=0$. Submitting both buy and sell orders with nonzero quantities is typical of e.g.\ market makers who provide liquidity to the market. A rational agent would, of course, have $s^b<s^a$ which we will assume throughout. 

At the end of the bidding period, the market is {\em cleared} by matching the maximum quantity $\bar x$ of buy limit orders with sell limit orders. That is, $\bar x$ is the largest number such that $s(\bar x)\le d(\bar x)$, where $s$ and $d$ are the {\em supply} and {\em demand curves}, respectively; see Figure~\ref{fig:mc}. For each $x$, the value $s(x)$ is the {\em marginal price} when buying a total of $x$ units from the most generous sellers. Mathematically, the supply curve is the nondecreasing piecewise constant function given by
\[
s(x) = \inf_{I'\subset I}\{\sup_{i\in I'}p^s_i\,|\,\sum_{i'\in I'}q^s_{i'}\ge x\}.
\]
Analogously, the demand curve is the nonincreasing piecewise constant function given by
\[
d(x) = \sup_{I'\subset I}\{\inf_{i\in I'}p^b_i\,|\,\sum_{i'\in I'}q^b_{i'}\ge x\}.
\]
The $\bar x$ units of the asset are traded at a {\em market clearing price}
\begin{equation}\label{mcp}\tag{MCP}
\bar p\in[s(\bar x),s(\bar x_+)]\cap[d(\bar x),d(\bar x_+)],
\end{equation}
where $s(\bar x_+)$ and $d(\bar x_+)$ denote the right limits of $s$ and $d$, respectively. If either $d$ or $s$ is continuous at $\bar x$, as in Figure~\ref{fig:mc}, then the market clearing price is uniquely defined. If the vertical parts of $d$ and $s$ overlap, there is a whole interval of possible market clearing prices and an additional rule is needed to choose one. A natural choice would be to use the middle point but the conclusions drawn here do not depend on the choice of a market clearing price. All sell orders involved in market clearing trades have limit prices less than or equal to the market clearing price and the involved buy orders have limit prices greater than or equal to the market clearing price. Thus, the agents involved in trading get a price at least as good as the ones they were willing to accept.

\begin{figure}
  \includegraphics[width=\linewidth,height=0.5\linewidth]{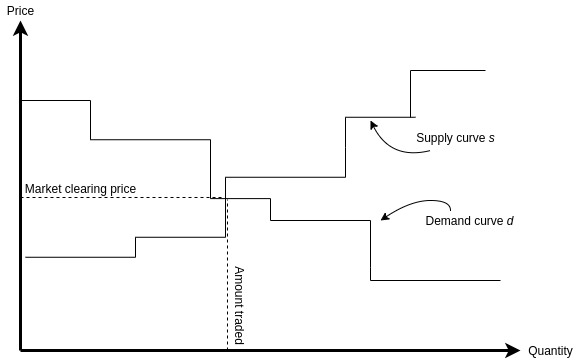}
  \caption{Market clearing in a double auction. }
  \label{fig:mc}
\end{figure}

\begin{figure}
  \includegraphics[width=\linewidth,height=0.5\linewidth]{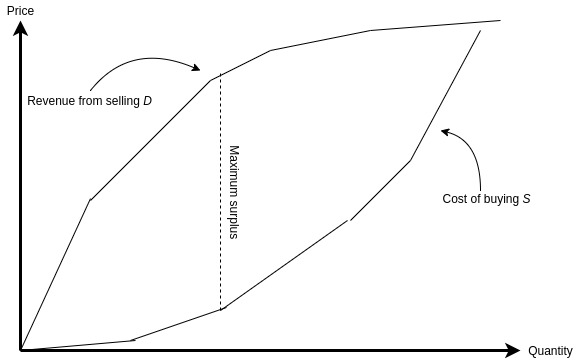}
  \caption{Market clearing maximizes the consumer surplus $D(x)-S(x)$.}
  \label{fig:mcm}
\end{figure}

The double auction mechanism has a variational formulation that will be useful in further analysis. Indeed, the market clearing condition means that $\bar x$ is the largest among all $x\ge 0$ that maximize the ``consumer surplus'' $D(x)-S(x)$, where $S(x)$ is the {\em least cost} of buying $x$ units from the potential sellers and $D(x)$ is the {\em greatest revenue} one could get by selling $x$ units to potential buyers; see Figure~\ref{fig:mcm}. Mathematically, 
\[
S(x) = \inf_{x_i}\left\{\sum_{i\in I}S_i(x_i)\,\right|\left.\sum_{i\in I}x_i=x\right\},
\]
where
\[
S_i(x_i)=
\begin{cases}
  p^s_ix_i & \text{if $x_i\in[0,q^s_i]$},\\
  +\infty & \text{otherwise}
\end{cases}
\]
is the amount of cash agent $i$ would require for selling $x_i$ units of the asset.
Analogously,
\[
D(x) = \sup_{x_i}\left\{\sum_{i\in I}D_i(x_i)\,\right|\left.\sum_{i\in I}x_i=x\right\},
\]
where
\[
D_i(x_i)=
\begin{cases}
  p^b_ix_i & \text{if $x_i\in[0,q^b_i]$},\\
  -\infty & \text{otherwise}
\end{cases}
\]
is the amount of cash agent $i$ would be willing to pay for $x_i$ units of the asset.

The functions $S$ and $D$ can be expressed also as the indefinite integrals of the supply curve and demand curves, $s$ and $d$, respectively. Conversely, the demand and supply curves $d$ and $s$ can recovered from the functions $D$ and $S$ through the relations
\[
\partial D(x)=[d(x),d(x+)]\quad\text{and}\quad\partial S(x)=[s(x),s(x+)],
\]
where $\partial D(x)$ is the {\em superdifferential} of $D$ at $x$, i.e.\ the set of prices $p$ with
\[
D(x)\le D(\bar x)+p(x-\bar x)\quad\forall x\ge 0.
\]
Similarly, $\partial S(\bar x)$ is the {\em subdifferential} of $S$ at $\bar x$, i.e.\ the set of prices $p$ with
\[
S(x)\ge S(\bar x)+p(x-\bar x)\quad\forall x\ge 0.
\]
We refer the reader to \cite[Section~23]{roc70a} for a general study of sub- and superdifferentials in finite-dimensional spaces. 

Condition \eqref{mcp} can now be written as
\[
p\in\partial D(\bar x)\cap\partial S(\bar x).
\]
This implies that $0\in\partial[D-S](\bar x)$ which means that $\bar x$ maximizes the consumer surplus $D(x)-S(x)$ as claimed earlier. Indeed, by definition of superdifferential, $0\in\partial[D-S](\bar x)$ means that
\[
D(x)-S(x)\le D(\bar x)-S(\bar x) + 0(x-\bar x)\quad \forall x\ge 0.
\]
Besides this standard formulation, there is another variational formulation of the double auction mechanism that turns out to be useful when studying its efficiency. Indeed, plugging in the definitions of $D$ and $S$, we can write the surplus maximization problem as
\[
\begin{aligned}
  &\maximize\quad & &\sum_{i\in I}D_i(x^+_i) - \sum_{i\in I}S_i(x^-_i) \quad\ovr\ x^+,x^-\in\reals^I\\
  &\st\quad & & \sum_{i\in I}x^+_i=\sum_{i\in I}x^-_i,
\end{aligned}
\]
where $x^+_i$ and $x^-_i$ denote the purchases and sales, respectively, of agent $i$. Note that since $p_i^b<p_i^s$, one has either $x^+_i=0$ or $x^-_i=0$. Interpreting negative purchases as sales and extending the definition of $D_i$ by
\begin{align*}
D_i(x_i)&:=\sup_{x^+_i,x^-_i}\{D_i(x^+_i)-S_i(x^-_i)\,|\,x^+_i-x^-_i=x_i\}\\
&=\begin{cases}
p^b_ix_i & \text{if $x_i\in[0,q_i^b]$},\\
p^s_ix_i & \text{if $x_i\in[-q_i^s,0]$},\\
-\infty & \text{if $x_i\notin[-q_i^s,q_i^b]$},
\end{cases}
\end{align*}
we can write the market clearing problem more concisely as
\begin{equation}\label{ps}\tag{$\tilde P$}
\begin{aligned}
  &\maximize\quad & &\sum_{i\in I}D_i(x_i) \quad\ovr\ x\in\reals^I\\
  &\st\quad & & \sum_{i\in I}x_i=0.
\end{aligned}
\end{equation}
The functions $D_i$ are {\em concave} since $p_i^b<p_i^s$. They contain exactly the same information as the agents' limit orders. In auction theory, $D_i(x_i)$ is known as the ``reservation price'' for $x_i$. It gives the maximum amount of cash agent $i$ is prepared to pay for $x_i$ units of the asset. Again, buying negative quantities is interpreted as sales and negative payments are income. Problem~\eqref{ps} can be interpreted as the maximum revenue the auctioneer could generate by buying the asset from some agents and selling to others at their reservation prices. This interpretation is not practically relevant, however, as the exchange simply implements the double auction and is not involved with trading. The agents involved in market clearing trade at the market clearing price.

Submitting several buy and sell orders, an agent can effectively submit any {\em concave function} $D_i$ to the exchange. Indeed, if agent $i$ submits finite collections $(p_{i,k}^b,q_{i,k}^b)_{k\in K^b}$ and $(p_{i,k}^s,q_{i,k}^s)_{k\in K^s}$ of buy and sell limit orders (here, $K^b$ and $K^s$ denote finite sets of buy and sell orders, respectively), we obtain the same market clearing problem \eqref{ps} but now
\begin{align*}
D_i(x_i)&=\sup_{x^+_{i,k},x^-_{i,k}}\left\{\left.\sum_{k\in K^b}D_{i,k}(x^+_{i.k})-\sum_{k\in K^s}S_{i,k}(x^-_{i,k})\,\right|\,\sum_{k\in K^b}x^+_{i,k}-\sum_{k\in K^s}x^-_{i,k}=x_i\right\}.
\end{align*}
Indeed, all the submitted limit orders are binding agreements to buy/sell at the offered prices up to the specified quantities. The function $D_i$ can be interpreted as the greatest amount of cash the auctioneer could get by selling $x_i$ units of the asset to agent $i$. Any piecewise linear concave function can be expressed in this form. Any concave function can, in turn, be approximated by piecewise linear functions to arbitrary accuracy

\section{Price-taking agents in multi-asset auctions}\label{sec:id}

Existing double auctions reviewed in the previous section only involve two assets: the asset being auctioned and cash. Our subsequent analysis will allow for multiple assets as it turns out that this does not present any complications in theory. It will, however, allow us to make comparisons with classical welfare economics and, in particular, general equilibrium models. The usual situation with two traded assets is covered as a special case.

Consider an economy with a finite set $J$ of assets and assume that agent $i\in I$ has initial endowment $x_i^0\in\reals^J$. Preferences of agent $i$ over different portfolios of assets are measured by a utility function $u_i$ on $\reals^J$. The setting is thus similar to that of pure exchange models of classical welfare economics; see e.g.\ \cite[Chapter~16]{mwg95} or \cite[Chapter~17]{var92}. We deviate, however, from the classical Walrasian-type market models where trading occurs at exogenously given equilibrium prices. Instead, we assume that all trades are executed in a call auction where prices and participating offers are determined according to the double auction mechanism. In general, a single call auction will not lead to an equilibrium but, as we will see in Section~\ref{sec:ra}, repeating the call auction from the updated positions gives a sequence of allocations that converges to an equilibrium.

Following the formulation \eqref{ps} of the double auction mechanism in the two-asset case, we assume that the agents submit concave functions $D_i$ specifying how much they would be willing to pay for a given portfolio of assets. The market is then cleared according to the double auction mechanism which, in the present multi-asset setting, amounts to solving the optimization problem
\begin{equation}\label{p}\tag{$P$}
  \begin{aligned}
    &\maximize\quad & & \sum_{i\in I}D_i(x_i)\quad\ovr\ x\in\reals^{I\times J}\\
    &\st\quad & & \sum_{i\in I}x_i=0
  \end{aligned}
\end{equation}
which is a straightforward generalization of the market clearing problem \eqref{ps} for a single asset. While in the single-asset setting, general concave functions $D_i$ can be approximated by submitting a collection of limit orders, problem \eqref{p}  should be taken as theoretical abstraction of limit order trading in exchanges implemented today. To submit approximations of concave functions in the multi-asset setting would require the development of exchanges where limit orders and market clearing involve multiple assets.

We will assume that each agent defines the value $D_i(x_i)$ as the maximum amount she could pay for a portfolio $x_i$ without worsening her current utility. More precisely, we assume {\em nonstrategic bidding} in the sense that each agent bids according to her {\em indifference prices} (reservation prices, willingness to pay, \ldots) defined by
\begin{equation}\label{idp}
D_i(x_i) := \sup\{r\in\reals\,|\, u_i(x^0_i+x_i-r g)\ge u_i(x^0_i)\},
\end{equation}
where $g$ denotes a {\em numeraire portfolio} in terms of which all prices are quoted. If prices are quoted in terms of cash, $g$ is the unit vector with nonzero component only for the cash asset; see Example~\ref{ex:cash} below. The value of $D_i(x_i)$ is the greatest amount of cash agent $i$ would be willing to pay for a portfolio $x_i$. Paying more would reduce the agent's utility. Again, negative purchases are interpreted as sales so that $S_i(x_i)=-D_i(-x_i)$ gives the indifference price for selling $x_i$. Thus, both buyers and sellers can be described by the indifference functions $D_i$.

Bidding the indifference price function is rational in one-off double auctions that are large enough so that an individual agent's bid has a negligible effect on the market clearing price. An agent's bid then only affects the quantity she buys while every unit bought below her indifference price increases her utility. When considering repeated double auctions as in Section~\ref{sec:ra} below, the situation becomes more complicated as one may want to postpone trading in the hope of beneficial market price developments. 
Roberts and Postlewaite~\cite{rp76} assume a general exchange mechanism and show that, possible gains from non-competitive behavior goes to zero as the number of agents increases. Our assumption of indifference pricing is akin to that made in Friedman~\cite{fri91} who assumed that the agents neglect strategic feedback effects and bid according to abstract strategies satisfying certain plausibility assumptions. Such assumptions are supported by Cason and Friedman~\cite{cf93} who find that relatively simple bidding rules explain human behavior in double auctions better than more sophisticated strategies. Plott~\cite[Chapter~10]{ps8} studies three two-sided auction mechanisms and finds that individuals tend to gain little from strategic bidding. Friedman~\cite[page~30]{fri10} summarizes the findings of a series of empirical studies as ``\ldots traders in simple static theoretical models have the incentive to understate their willingness to transact, and those understatements lead to inefficient outcomes in thin markets. However, stationary repetition in the CDA teaches traders not to understate inefficiently: they learn to shade bids and asks towards earlier transaction prices, but not beyond. The learning process leads to a Nash equilibrium that implements a competitive equilibrium outcome''.

We allow for extended real-valued and nondifferentiable utility functions $u_i$ but assume the following throughout.

\begin{assumption}\label{ass:g}
The utility functions $u_i$ are upper semicontinuous, concave and strictly increasing in the direction of the numeraire asset $g$ in the sense that
\[
u_i(x_i+rg)>u_i(x_i)
\]
for all $x_i\in\dom u_i$ and $r>0$.
\end{assumption}

Here and in what follows,
\[
\dom u_i := \{x_i\in\reals^J\mid u_i(x_i)>-\infty\}.
\]
Extended real-valued utility functions allow for incorporation of constraints with infinite penalties. This is essential e.g.\ with Cobb-Douglas-type utility functions (see Section~\ref{sec:num}) whose domains are the positive orthants $\reals^J_+$. In the general setup, however, we do not require $\dom u_i\subseteq\reals^J_+$, so short positions may be feasible. Also, besides growth in the direction of $g$, we do not assume any monotonicity properties so some agents may find utility in assets that others find undesirable altogether. Since we do not require differentiability, our analysis covers also Leontief-type utility functions
\[
u_i(x_i)=\min_{j\in J}\{\alpha_{i,j}x_{i,j}\},
\]
where $\alpha_{i,j}$ are real parameters. Nonsmoothness is essential also in describing agents with production facilities.

\begin{example}[Indirect utility functions]\label{ex:idu}
Our assumptions allow for indirect utility functions of the form
\[
u_i(x_i) = \sup_{y_i}\{U_i(y_i)\,|\,y_i\in Y_i(x_i)\},
\]
where $Y_i$ is a von Neumann--Gale-type ``production mapping'' describing how a vector $x_i$ of goods can be transformed into another $y_i$ and $U_i$ is a utility function on the outputs. As soon as the set $G:=\{(x_i,y_i)\mid y_i\in Y_i(x_i)\}$ is convex and $U_i$ is concave, the function $u_i$ will be concave as well. Applying \cite[Theorem~9.2]{roc70a} would give quite general sufficient conditions for the upper semicontinuity. A sufficient condition is that $U_i$ is upper semicontinuous, $G$ is closed and $Y_i(0)$ be bounded.

One could also allow for random production mappings and utilities and define
\[
u_i(x_i)=\sup_{y_i}\{E^{P_i}U_i(y_i)\,|\,y_i\in Y_i(x_i)\ P_i\text{-a.s.}\},
\]
where the supremum is taken over random outputs $y_i$ and $P_i$ is a probability measure describing agent $i$'s views about the uncertain future states. Such indirect utility functions incorporate agents' information through the subjective probability measures $P_i$ which may evolve at the arrival of news and other information. Changes in the subjective probabilities would affect the indirect utilities $u_i$ and thus, market clearing prices, as we will see below. Indirect utilities are often often used in indifference pricing in incomplete market models of financial economics; see e.g.\ \cite[Chapter~5]{car9} or \cite{pen14} and the references there.
\end{example}

In what follows, we assume that the function $D_i$ submitted by agent $i$ is defined by \eqref{idp}. The following is a simple consequence of Assumption~\ref{ass:g}; see e.g.\ \cite[Proposition~2]{pen12}.

\begin{lemma}\label{lem:usc}
Under Assumption~\ref{ass:g}, the function $D_i$ is concave and upper semicontinuous with $D_i(0)=0$ and
\[
D_i(x_i+rg)=D_i(x_i)+r
\]
for all $x_i\in\dom D_i$ and $r\in\reals$.  
\end{lemma}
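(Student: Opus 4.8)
The plan is to verify each of the three asserted properties of $D_i$ directly from its definition \eqref{idp}, leaning on Assumption~\ref{ass:g} at the appropriate points. I would begin by fixing $i\in I$ and writing $D_i(x_i)=\sup R(x_i)$ where $R(x_i):=\{r\in\reals\mid u_i(x^0_i+x_i-rg)\ge u_i(x^0_i)\}$. The three claims — the translation identity, the normalization $D_i(0)=0$, and concavity with upper semicontinuity — should be established roughly in that order, since the translation identity and normalization are elementary and the concavity/u.s.c.\ part is the substantive one.

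\medskip

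First I would prove the translation identity $D_i(x_i+rg)=D_i(x_i)+r$. This follows from a change of variable in the supremum: for any $s\in\reals$, the condition $u_i(x^0_i+(x_i+rg)-sg)\ge u_i(x^0_i)$ is literally $u_i(x^0_i+x_i-(s-r)g)\ge u_i(x^0_i)$, so $R(x_i+rg)=r+R(x_i)$ as subsets of $\reals$, and taking suprema gives the claim. The normalization $D_i(0)=0$ is where the strict-monotonicity half of Assumption~\ref{ass:g} enters: $R(0)=\{r\mid u_i(x^0_i-rg)\ge u_i(x^0_i)\}$, and strict increase in the direction $g$ forces $u_i(x^0_i-rg)<u_i(x^0_i)$ for $r>0$ while $r\le 0$ trivially satisfies the inequality (for $r<0$ again by strict monotonicity, for $r=0$ by equality), so $\sup R(0)=0$. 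One small point to handle cleanly is that $x^0_i\in\dom u_i$, so that $u_i(x^0_i)$ is a genuine real number and these comparisons make sense; I would state this as a standing hypothesis implicit in the model.

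\medskip

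The main work is concavity together with upper semicontinuity, and this is the step I expect to be the real obstacle. The cleanest route is to recognize $D_i$ as a kind of conjugate/support-type construction and argue via the epigraph (here, the hypograph, since we want concavity and u.s.c.). The translation identity already shows that the hypograph of $D_i$ is essentially the set $\{(x_i,t)\mid u_i(x^0_i+x_i-tg)\ge u_i(x^0_i)\}$, i.e.\ $\hat x_i:=x_i-tg$ traces out a shifted copy of the upper level set $\lev_{\ge u_i(x^0_i)}u_i:=\{y\mid u_i(y)\ge u_i(x^0_i)\}$. Concretely, $t\le D_i(x_i)$ holds exactly when $x^0_i+x_i-tg$ lies in this level set (using that $u_i$ is u.s.c.\ so the supremum defining $D_i$ is attained and the level set is closed). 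Since $u_i$ is concave and u.s.c., this upper level set is a closed convex set; its preimage under the affine map $(x_i,t)\mapsto x^0_i+x_i-tg$ is therefore closed and convex, and this preimage is precisely the hypograph of $D_i$. A function whose hypograph is closed and convex is exactly an upper semicontinuous concave function, which gives both remaining claims at once. I would be careful to confirm that the supremum in \eqref{idp} is indeed attained under u.s.c.\ (so that the $\ge$ level set, not a half-open variant, is the right object) and that the translation identity guarantees the hypograph does not degenerate; the cited \cite[Proposition~2]{pen12} presumably packages exactly this level-set argument, so I would either invoke it or reproduce the short hypograph computation inline.
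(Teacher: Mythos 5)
Your proof is correct. Note that the paper itself does not prove this lemma: it simply remarks that it is ``a simple consequence of Assumption~\ref{ass:g}'' and cites \cite[Proposition~2]{pen12}, so there is no in-paper argument to compare against. Your three steps are exactly the right ones, and the substantive part --- identifying the hypograph of $D_i$ with the preimage of the closed convex upper level set $\{y\mid u_i(y)\ge u_i(x^0_i)\}$ under the affine map $(x_i,t)\mapsto x^0_i+x_i-tg$ --- is the natural way to get concavity and upper semicontinuity simultaneously, and is almost certainly what the cited reference does. One small point worth making explicit in the hypograph identification: the equivalence $t\le D_i(x_i)\iff u_i(x^0_i+x_i-tg)\ge u_i(x^0_i)$ needs not only that the supremum in \eqref{idp} is attained (which you correctly derive from upper semicontinuity of $u_i$) but also that the feasible set $R(x_i)$ is an interval unbounded below; that downward-closedness comes from the strict monotonicity of $u_i$ in the direction $g$, the same half of Assumption~\ref{ass:g} you already use for $D_i(0)=0$, so it is a one-line addition rather than a gap. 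Your flagged standing hypothesis $x^0_i\in\dom u_i$ is indeed implicit in the model and needed for the normalization step.
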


\begin{example}[Cash as numeraire]\label{ex:cash}
In many markets, the numeraire asset is cash. If cash is denoted by $0\in J$, this means that the numeraire $g$ in the definition \eqref{idp} of the function $D_i$ is a unit vector with $g_0=1$ and $g_j=0$ for $j\in\tilde J:=J\setminus\{0\}$. Denoting $x_i=(x_{i,0},\tilde x_i)\in\reals\times\reals^{\tilde J}$, Lemma~\ref{lem:usc} gives
\[
D_i(x_i) = D_i(0,\tilde x_i) + x_{i,0}
\]
so problem \eqref{p} can be written as
\[
\begin{aligned}
  &\maximize\quad & & \sum_{i\in I}D_i(0,\tilde x_i)+\sum_{i\in I}x_{i,0}\quad\ovr\ x\in\reals^{I\times J}\\
  &\st\quad & & \sum_{i\in I} x_{i,0}=0,\\
  & & & \sum_{i\in I}\tilde x_i=0,
\end{aligned}
\]
or equivalently,
\[
\begin{aligned}
  &\maximize\quad & & \sum_{i\in I}D_i(0,\tilde x_i) \quad\ovr\ \tilde x\in\reals^{I\times\tilde J}\\
  &\st\quad & & \sum_{i\in I}\tilde x_i=0.
\end{aligned}
\]
When $\tilde J$ is a singleton, we recover problem \eqref{ps} in the two-asset setting of Section~\ref{sec:loda} where a single auctioned asset is paid for in cash.
\end{example}

The market clearing prices will be the {\em Lagrange multipliers} associated with the market clearing constraint in \eqref{p}. In order to guarantee the existence of market clearing prices, we will consider the following {\em perturbed} market clearing problem
\begin{equation}\label{pz}\tag{$P_z$}
  \begin{aligned}
    &\maximize\quad & & \sum_{i\in I}D_i(x_i)\quad\ovr\ x\in\reals^{I\times J}\\
    &\st\quad & & \sum_{i\in I}x_i=z,
  \end{aligned}
\end{equation}
where $z\in\reals^J$. The optimum value can be interpreted as the maximum revenue the auctioneer could get by selling the portfolio $z$ to the market participants at their indifference prices. This interpretation is, however, irrelevant as the auctioneer simply implements the double auction mechanism and does not get involved with trading otherwise. When $z=0$, problem \eqref{pz} becomes the market clearing problem \eqref{p}. The existence of a Lagrange multiplier is equivalent to the subdifferentiability at the origin of the optimum value of \eqref{pz} as a function of $z$; see \cite[Theorem~16]{roc74}. Since the optimum value function is concave, the subdifferentiability is implied by continuity at the origin; see \cite[Theorem~11]{roc74}. By \cite[Theorem~8]{roc74}, the following implies the continuity. 

\begin{assumption}\label{slater}
There is an $\varepsilon>0$ such that the optimum value of \eqref{pz} is finite for all $z\in\reals^J$ with $|z|\le\varepsilon$.
\end{assumption}

A sufficient condition for Assumption~\ref{slater} is that, for each asset $j\in J$, there are agents $i,i'\in I$ such that $D_i(e_j)$ and $D_{i'}(-e_j)$ are finite. Here $e_j$ denotes the vector with $\epsilon$ at the $j$th component and zeros elsewhere. Indeed, we then have that the optimum value is finite at $e_j$ and $-e_j$ for all $j\in J$. By concavity, the optimum value is then finite over the convex hull of such vectors, a set which contains a neighborhood of the origin. In the two-asset setting of Section~\ref{sec:loda}, this holds if there is at least one seller and one buyer. Assumption~\ref{slater} could be weakened to requiring that the origin belongs to the relative interior of the domain of the optimum value function.

The following is a simple application of the classical optimality conditions in convex optimization; see e.g.\ \cite[Theorem~15]{roc74} or \cite[Section~28]{roc70a}.

\begin{theorem}\label{kkt}
Under Assumption \ref{slater}, an $\bar x$ solves \eqref{p} if and only if there exists a price vector $p\in\reals^J$ such that
\begin{align}
  \partial D_i(\bar x_i) &\ni p\quad\forall i\in I,\label{kkt1}\\
  \sum_{i\in I}\bar x_i&=0.\label{kkt2}
\end{align}
\end{theorem}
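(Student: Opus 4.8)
The plan is to derive the stated conditions as the Karush--Kuhn--Tucker conditions for the concave maximization problem \eqref{p}, exploiting the fact that the objective is a sum of concave functions $D_i$ and the constraint is the single linear equation $\sum_{i\in I}x_i=0$. First I would form the Lagrangian by dualizing the equality constraint. Introducing a multiplier $p\in\reals^J$ for the vector-valued constraint $\sum_{i\in I}x_i=0$, the Lagrangian becomes
\[
L(x,p)=\sum_{i\in I}D_i(x_i)-\Bigl\langle p,\sum_{i\in I}x_i\Bigr\rangle=\sum_{i\in I}\bigl(D_i(x_i)-\langle p,x_i\rangle\bigr),
\]
which separates across agents. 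The key observation is that, by the cited convex duality theory \cite[Theorem~15]{roc74}, an optimal $\bar x$ exists together with a multiplier $p$ precisely when $\bar x$ maximizes $L(\cdot,p)$ and is feasible; and the separability reduces the maximization of $L(\cdot,p)$ over $x$ to the $|I|$ independent problems of maximizing $D_i(x_i)-\langle p,x_i\rangle$ over $x_i\in\reals^J$.

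The central step is then to translate the unconstrained concave maximization of $x_i\mapsto D_i(x_i)-\langle p,x_i\rangle$ into the subdifferential inclusion \eqref{kkt1}. Since each $D_i$ is concave (Lemma~\ref{lem:usc}), $\bar x_i$ maximizes this function if and only if $0$ lies in the superdifferential of $D_i(\cdot)-\langle p,\cdot\rangle$ at $\bar x_i$, and by the sum rule for superdifferentials this is exactly $p\in\partial D_i(\bar x_i)$. Condition \eqref{kkt2} is simply primal feasibility. To make the argument rigorous I would invoke the strong duality guaranteed by Assumption~\ref{slater}: as noted in the excerpt, this assumption secures continuity of the optimum value function of \eqref{pz} at the origin, hence its subdifferentiability there, which is equivalent to the existence of the Lagrange multiplier and thus rules out a duality gap.

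The main obstacle is ensuring that the duality argument applies despite the functions $D_i$ being extended-real-valued and possibly nondifferentiable, so that no constraint qualification subtler than Assumption~\ref{slater} is needed and the subdifferential sum rule holds without differentiability. This is handled by appealing directly to the conjugate-duality framework of \cite{roc74}: the perturbed problem \eqref{pz} is precisely the value function $z\mapsto v(z)$ whose concavity is automatic and whose finiteness near the origin (Assumption~\ref{slater}) yields both the existence of $p\in\partial v(0)$ and the saddle-point characterization. The ``if and only if'' then follows because, for concave problems with strong duality, a feasible $\bar x$ is optimal exactly when it together with some dual $p$ forms a saddle point of $L$, and the saddle-point conditions decompose into \eqref{kkt1}--\eqref{kkt2}. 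I would close by noting that no further regularity is required, since everything rests on concavity and the single finiteness hypothesis.
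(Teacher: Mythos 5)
Your proposal is correct and follows exactly the route the paper intends: it instantiates the classical Lagrangian/saddle-point optimality conditions of \cite[Theorem~15]{roc74} (equivalently \cite[Section~28]{roc70a}), using Assumption~\ref{slater} to get subdifferentiability of the value function of \eqref{pz} at the origin and the separability of the Lagrangian to reduce \eqref{kkt1} to the per-agent supergradient inclusion. The paper gives no further detail than this citation, so your write-up is simply a fuller account of the same argument.
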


Condition \eqref{kkt1} means that $p$ is a {\em supergradient} of $D_i$ at $\bar x_i$, i.e.
\begin{equation}\label{sg}
D_i(x_i)\le D_i(\bar x_i)+p\cdot (x_i-\bar x_i)\quad\forall x_i\in\reals^J.
\end{equation}
The vector $p$ in Theorem~\ref{kkt} is a {\em market clearing price}: agent $i$ will pay $p\cdot\bar x_i$ units of the numeraire portfolio $g$ for $\bar x_i$. Thus, at market clearing, the agent's portfolio will be updated to
\[
x_i^1 := x^0_i + \bar x_i-(p\cdot\bar x_i)g.
\]
Again, if a component $x_{i,j}$ of $x_i$ is negative, then agent $i$ is selling asset $j$ and receiving $-p_j\bar x_{i,j}$ units of cash for it. The market clearing condition \eqref{kkt2} implies
\begin{equation}\label{eq:mcc}
\sum_{i\in I}x^1_i = \sum_{i\in I}x^0_i
\end{equation}
so the new allocation is feasible. Choosing $x_i=\bar x_i+rg$ in \eqref{sg} and using the last property in Lemma~\ref{lem:usc} gives
\[
g\cdot p = 1,
\]
i.e.\ the numeraire is priced at $1$. In the setting of Example~\ref{ex:cash} where $g$ is a unit vector, this just means that $p_0=1$ (one unit of cash is worth one unit of cash).

\begin{remark}[Consumer surplus]\label{rem:cs}
Condition \eqref{kkt1} or, equivalently, \eqref{sg} implies
\[
D_i(0)\le D_i(\bar x_i)-p\cdot\bar x_i
\]
where, by Lemma~\ref{lem:usc}, the left hand side equals zero under Assumption~\ref{ass:g}. Thus, the payment $p\cdot\bar x_i$ is less than $D_i(\bar x_i)$ which is what the agent was prepared to pay for $\bar x_i$. The difference may be thought of as agent $i$'s {\em consumer surplus}. Any feasible allocation $x$ satisfies
\[
\sum_{i\in I}D_i(x_i) = \sum_{i\in I}[D_i(x_i)-p\cdot x_i]
\]
so the double auction mechanism maximizes the {\em total consumer surplus} over all feasible allocations. 
\end{remark}  

\begin{remark}[Pareto improvements]\label{rem:paretoimp}
The double auction mechanism makes a Pareto improvement of allocations. Indeed, we have
\[
x_i^1 = x^0_i + \bar x_i-D_i(\bar x_i)g + [D_i(\bar x_i)-p\cdot\bar x_i]g.
\]
Under Assumption~\ref{ass:g}, $D_i(\bar x_i)-p\cdot\bar x_i\ge 0$ and
\[
u_i(x^1_i)\ge u_i(x^0_i + \bar x_i-D_i(\bar x_i)g)
\]
while $u_i(x^0_i + \bar x_i-D_i(\bar x_i)g)\ge u_i(x^0_i)$, by the definition of the indifference price $D_i(\bar x_i)$. Thus,
\[
u_i(x^1_i)\ge u_i(x^0_i)
\]
where the inequality is strict unless $D_i(\bar x_i)=p\cdot\bar x_i$. The equality would mean that agent $i$ submitted a buy order with limit price equal to the market clearing price.
\end{remark}  

\begin{remark}[Multiplicity of market clearing prices]
Just as in the single asset auction, the market clearing prices $p$ need not be unique. The conclusions drawn here do not depend on the choice but, in practice, of course, the choice is important to all agents involved in market clearing trades.
\end{remark}  

\begin{remark}[Negative prices]
The market clearing constraint in problem \eqref{p} is an equality instead of an inequality. This means that we do not allow for free disposal of the assets. It follows that some of the market clearing prices may be strictly negative. This has practical significance as some of the assets may become a liability to some agents. Examples of the phenomenon have been observed e.g.\ in electricity markets where excess supply during periods of high windpower production has been met with low demand and significantly negative electricity prices. Another example is the negative money market rates in the eurozone since June 2014.
\end{remark}

\section{Existence of solutions}

The double auction mechanism discussed above, only makes sense if problem \eqref{p} admits optimal solutions $\bar x$. In the two-asset setting of Section~\ref{sec:loda}, a solution exists except in the unrealistic case where the demand curve lies strictly above the supply curve on the whole positive axis. This would mean that an infinite quantity of buy and sell offers could be matched. Theorem~\ref{thm:exist} below gives sufficient conditions for existence in the general case.

The market clearing problem \eqref{p} is written in terms of the functions $D_i$ defined through minimization in \eqref{idp}. Plugging in the definitions, we can write the problem in terms of the utility functions $u_i$ as follows.

\begin{lemma}\label{lem:opt}
Problem \eqref{p} is equivalent to the problem
\begin{equation}\label{opt}\tag{$P'$}
\begin{aligned}
&\maximize\quad & & r\quad\ovr\ r\in\reals,\ w\in\reals^{I\times J}\\
  &\st\quad & & \sum_{i\in I}w_i+rg=\sum_{i\in I}x^0_i,\\
  & & & u_i(w_i) \ge u_i(x_i^0)\quad i\in I
\end{aligned}
\end{equation}
in the sense that their optimum values coincide and an $\bar x$ solves \eqref{p} if and only if there exist $r_i$ such that
\[
r = \sum_{i\in I}r_i
\]
and $w_i=x^0_i+\bar x_i - r_ig$ solve \eqref{opt}.
\end{lemma}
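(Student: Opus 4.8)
The plan is to establish the equivalence of \eqref{p} and \eqref{opt} by unwinding the definition \eqref{idp} of $D_i$ and showing that the two optimization problems encode the same feasibility-and-value structure. The central observation is that $D_i(x_i)$ is defined as the supremum of $r$ such that $u_i(x^0_i + x_i - rg) \ge u_i(x^0_i)$, so the value $D_i(\bar x_i)$ is precisely the largest amount of numeraire an agent can surrender while buying $\bar x_i$ and remaining at least as well off as at her endowment. I would first record that, by upper semicontinuity of $u_i$ (Assumption~\ref{ass:g}) and strict monotonicity in the direction of $g$, the supremum in \eqref{idp} is attained whenever it is finite, so that setting $r_i = D_i(\bar x_i)$ and $w_i = x^0_i + \bar x_i - r_i g$ yields $u_i(w_i) \ge u_i(x^0_i)$ with the individual-rationality constraint binding in the relevant sense.

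Next I would prove the two inclusions between the feasible-value sets. For one direction, suppose $\bar x$ solves \eqref{p} with market clearing $\sum_i \bar x_i = 0$; put $r_i := D_i(\bar x_i)$, $w_i := x^0_i + \bar x_i - r_i g$, and $r := \sum_i r_i$. Summing the defining relations gives $\sum_i w_i + r g = \sum_i x^0_i + \sum_i \bar x_i = \sum_i x^0_i$, using \eqref{kkt2}, and each $w_i$ satisfies $u_i(w_i) \ge u_i(x^0_i)$ by attainment in \eqref{idp}; hence $(r, w)$ is feasible for \eqref{opt} with objective $r = \sum_i D_i(\bar x_i)$, the optimum value of \eqref{p}. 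For the converse, given any feasible $(r, w)$ of \eqref{opt}, I would set $\bar x_i := w_i - x^0_i + r_i g$ for an appropriate split $r = \sum_i r_i$ and check that $\sum_i \bar x_i = 0$ follows from the resource constraint, while the constraint $u_i(w_i) \ge u_i(x^0_i)$ forces $D_i(\bar x_i) \ge r_i$ by definition of the indifference price, giving $\sum_i D_i(\bar x_i) \ge \sum_i r_i = r$. Combining the two directions shows the optimum values coincide and that optimizers correspond exactly as claimed.

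The step I expect to be the main obstacle is handling the allocation of the aggregate numeraire amount $r$ into the individual shares $r_i$ in the converse direction, since \eqref{opt} only constrains the total $r$ whereas \eqref{p} implicitly assigns a separate indifference payment $r_i = D_i(\bar x_i)$ to each agent. The resolution is that at an optimum of \eqref{opt} the individual-rationality constraints must all bind (otherwise one could extract extra numeraire from a slack agent, by strict monotonicity in the $g$-direction, and increase $r$), which pins down each $r_i$ as the maximal feasible transfer, i.e. $r_i = D_i(\bar x_i)$. I would therefore insert a short argument that strict monotonicity rules out slack: if $u_i(w_i) > u_i(x^0_i)$ strictly, then $w_i - \delta g$ still satisfies the constraint for small $\delta > 0$, freeing $\delta$ units of numeraire to augment $r$, contradicting optimality. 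This binding argument is what makes the decomposition $r = \sum_i r_i$ with $r_i = D_i(\bar x_i)$ both possible and forced, and it is the crux linking the attained-supremum characterization of $D_i$ to the constraint structure of \eqref{opt}.
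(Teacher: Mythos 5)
Your core argument is correct and is essentially the paper's proof: substitute the definition \eqref{idp} into the objective of \eqref{p}, pass to the variables $w_i=x^0_i+x_i-r_ig$, and observe that objective and constraints depend on the $r_i$ only through their sum. Your explicit remark that the supremum in \eqref{idp} is attained when finite (the superlevel set $\{r\mid u_i(x^0_i+x_i-rg)\ge u_i(x^0_i)\}$ is closed by upper semicontinuity) is a worthwhile addition that the paper leaves implicit, and it is exactly what the forward direction of the solution correspondence needs.

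However, the final paragraph misidentifies the crux and contains a step that does not follow from the lemma's hypotheses. In the converse direction there is no obstacle in splitting $r$: \emph{any} decomposition $r=\sum_{i\in I}r_i$ works, since feasibility of $w_i$ gives $D_i(\bar x_i)\ge r_i$ directly and hence $\sum_i D_i(\bar x_i)\ge r$; you do not need the individual-rationality constraints to bind, nor $r_i=D_i(\bar x_i)$. Moreover, your argument that slack constraints can be tightened --- ``if $u_i(w_i)>u_i(x^0_i)$ then $w_i-\delta g$ still satisfies the constraint for small $\delta>0$'' --- is not valid under Assumption~\ref{ass:g} alone: strict monotonicity in the direction $g$ says nothing about moving in the direction $-g$, and $w_i-\delta g$ may leave $\dom u_i$ for every $\delta>0$ (this is precisely what Assumption~\ref{ass:0} is introduced to rule out later, and it is not assumed here). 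Since that paragraph is dispensable, the proof stands once it is deleted; but as written it presents an unjustified claim as ``the crux'' of the argument.
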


\begin{proof}
Using the definition of $D_i(x_i)$, we can write problem \eqref{p} as
\begin{equation*}
\begin{aligned}
&\maximize\quad & & \sum_{i\in I}r_i\quad\ovr\ r\in\reals^I,\ x\in\reals^{I\times J}\\
  &\st\quad & &\sum_{i\in I}x_i=0,\\
  & & & u_i(x_i^0+x_i-r_i g) \ge u_i(x_i^0)\quad i\in I,
\end{aligned}
\end{equation*}
or in terms of $w_i:=x^0_i+x_i-r_i g$, as
\begin{equation*}
\begin{aligned}
&\maximize\quad & & \sum_{i\in I}r_i\quad\ovr\ r\in\reals^I,\ w\in\reals^{I\times J}\\
  &\st\quad & & \sum_{i\in I}w_i + \sum_{i\in I}r_ig=\sum_{i\in I}x^0_i,\\
  & & & u_i(w_i) \ge u_i(x_i^0)\quad i\in I.
\end{aligned}
\end{equation*}
This is the problem in the statement with $r=\sum_{i\in I}r_i$.
\end{proof}

By Lemma~\ref{lem:opt}, the market clearing problem \eqref{p} has a solution if and only if problem \eqref{opt} has one. Theorem~\ref{thm:exist} below gives sufficient conditions for existence in terms of the {\em recession functions} defined for the upper semicontinuous concave functions $u_i$ by
\[
u_i^\infty(x_i):=\inf_{\alpha>0}\frac{u_i(\bar x_i + \alpha x_i)-u_i(\bar x_i)}{\alpha},
\]
where $\bar x_i\in\dom u_i$. By \cite[Theorem~8.5]{roc70a}, the definition is independent of the choice of $\bar x_i\in\dom u_i$. The recession function describes the asymptotic behavior of $u_i$ infinitely far from the origin. It is concave and positively homogeneous. If $u_i$ is positively homogeneous, then $u_i^\infty=u_i$. 

\begin{assumption}\label{ass:rec}
If $x\in\reals^{I\times J}$ is such that
\[
u_i^\infty(x_i)\ge 0,\ \sum_{i\in I}x_i=0,
\]
then $x=0$.
\end{assumption}

Assumption~\ref{ass:rec} holds if there is a pointed convex cone $K$ that contains, for each $i\in I$, the ``directions of recession''
\[
\{x\in\reals^J\mid u_i^\infty(x)\ge 0\}
\]
of $u_i$; see \cite{roc70a}. By \cite[Theorem~8.6]{roc70a}, directions of recession are precisely the vectors $x$ such that $\lambda\mapsto u_i(\bar x+\lambda x)$ is nondecreasing for all $\bar x\in\dom u_i$. For Cobb-Douglas utilities, one can take $K=\reals^J_+$. The same works for Leontief utilities under the usual assumption of strictly positive parameters $\alpha_{i,j}$.

\begin{theorem}\label{thm:exist}
The market clearing problem \eqref{p} admits solutions under Assumptions~\ref{ass:g} and \ref{ass:rec}.
\end{theorem}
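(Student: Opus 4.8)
The plan is to invoke Lemma~\ref{lem:opt} and prove instead that the equivalent problem \eqref{opt} attains its supremum. Write its feasible set as
\[
F=\Big\{(r,w)\in\reals\times\reals^{I\times J}\ \Big|\ \sum_{i\in I}w_i+rg=\sum_{i\in I}x^0_i,\ u_i(w_i)\ge u_i(x^0_i)\ \forall i\in I\Big\}.
\]
This set is nonempty, since $(r,w)=(0,(x^0_i)_{i\in I})$ is feasible; it is closed because each $u_i$ is upper semicontinuous and the equality constraint is affine; and it is convex because each $u_i$ is concave. The objective $(r,w)\mapsto r$ is linear, so it suffices to show that some upper level set $L_\gamma:=\{(r,w)\in F\mid r\ge\gamma\}$ is nonempty and compact: a continuous function attains its maximum over a nonempty compact set, and a maximizer over $L_\gamma$ (with $\gamma$ chosen at a feasible value of $r$) is automatically a maximizer over all of $F$. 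The whole proof therefore reduces to a compactness statement, and by finite dimensionality the closed set $L_\gamma$ is compact precisely when its recession cone is trivial.

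Next I would compute the recession cone of $F$. Since $x^0_i\in\dom u_i$, the superlevel set $\{w_i\mid u_i(w_i)\ge u_i(x^0_i)\}$ is nonempty, closed and convex, and its recession cone equals $\{v_i\mid u_i^\infty(v_i)\ge 0\}$ by the standard calculus of recession functions \cite[Section~8]{roc70a}. Intersecting with the subspace parallel to the affine constraint, and using that the recession cone of a finite intersection of closed convex sets with nonempty intersection is the intersection of the recession cones, I get
\[
0^+F=\Big\{(s,v)\in\reals\times\reals^{I\times J}\ \Big|\ \sum_{i\in I}v_i+sg=0,\ u_i^\infty(v_i)\ge 0\ \forall i\in I\Big\}.
\]
The recession cone of $L_\gamma$ is then $0^+F\cap\{(s,v)\mid s\ge 0\}$, so it remains to show that every $(s,v)\in 0^+F$ with $s\ge 0$ is the zero vector.

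This last step is the crux. Given such a $(s,v)$, fix an agent $i_0$ and redistribute the numeraire by setting $\tilde v_{i_0}:=v_{i_0}+sg$ and $\tilde v_i:=v_i$ for $i\ne i_0$, so that $\sum_{i}\tilde v_i=\sum_i v_i+sg=0$. The recession functions $u_i^\infty$ are concave and positively homogeneous, hence superadditive, and $u_{i_0}^\infty(g)\ge 0$ because $u_{i_0}$ is increasing in the direction $g$; therefore $u_{i_0}^\infty(\tilde v_{i_0})\ge u_{i_0}^\infty(v_{i_0})+s\,u_{i_0}^\infty(g)\ge 0$, while $u_i^\infty(\tilde v_i)\ge 0$ trivially for $i\ne i_0$. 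Assumption~\ref{ass:rec} now forces $\tilde v=0$, i.e.\ $v_i=0$ for $i\ne i_0$ and $v_{i_0}=-sg$. If $s>0$, then strict monotonicity in $g$ makes every difference quotient $[u_{i_0}(\bar x_{i_0}-\alpha g)-u_{i_0}(\bar x_{i_0})]/\alpha$ strictly negative for $\bar x_{i_0}\in\dom u_{i_0}$, so $u_{i_0}^\infty(-g)<0$ and hence $u_{i_0}^\infty(v_{i_0})=s\,u_{i_0}^\infty(-g)<0$, contradicting $u_{i_0}^\infty(v_{i_0})\ge 0$. Thus $s=0$, whereupon $v_{i_0}=0$ as well and $(s,v)=0$.

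I expect the main obstacle to be exactly this separation into two regimes: Assumption~\ref{ass:rec} eliminates the ``horizontal'' recession directions with $s=0$ immediately, but the directions with $s>0$ are not of the form covered by that assumption and must first be converted into zero-sum directions by shifting the numeraire onto a single agent. This conversion is legitimate only because $u_{i_0}^\infty$ is superadditive and nonnegative along $g$, and the resulting direction is then excluded using the \emph{strict} (rather than merely weak) monotonicity of Assumption~\ref{ass:g}. Once $0^+F\cap\{s\ge 0\}=\{0\}$ is established, $L_\gamma$ is compact, \eqref{opt} attains its maximum, and Lemma~\ref{lem:opt} transfers the conclusion back to \eqref{p}.
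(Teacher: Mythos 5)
Your proof is correct and follows essentially the same route as the paper: reduce to problem \eqref{opt} via Lemma~\ref{lem:opt} and show the feasible directions of recession with $s\ge 0$ are trivial by absorbing the numeraire component $sg$ into the $v_i$'s, applying Assumption~\ref{ass:rec}, and then using strict monotonicity along $g$ to force $s=0$. The only cosmetic differences are that the paper cites \cite[Corollary~27.3.3]{roc70a} instead of spelling out the level-set compactness argument, and it splits $sg$ equally among all agents rather than assigning it to a single agent $i_0$.
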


\begin{proof}
It suffices to prove that problem \eqref{opt} in Lemma~\ref{lem:opt} has a solution. By \cite[Corollary~27.3.3]{roc70a}, it suffices to show that the set
\[
\{(r,w)\in\reals\times\reals^{I\times J}\mid r\ge 0,\ \sum_{i\in I}w_i+rg=0,\ u_i^\infty(w_i)\ge 0\}
\]
only contains the origin. Under Assumption~\ref{ass:g}, any element $(r,w)$ of this set satisfies
\[
r\ge 0,\ \sum_{i\in I}(w_i+\frac{r}{|I|}g)=0,\ u_i^\infty(w_i+\frac{r}{|I|}g)\ge 0
\]
and, by Assumption~\ref{ass:rec}, $w_i+\frac{r}{|I|}g=0$. Since $u_i(w_i)\ge 0$, this implies $u_i^\infty(-\frac{r}{|I|}g)\ge 0$ which, under Assumption~\ref{ass:g}, can only hold if $r=0$ and then, $w_i=0$ as well.
\end{proof}




\section{Double auction equilibria and Pareto allocations}\label{sec:pareto}

By Remark~\ref{rem:cs}, the optimum value of the market clearing problem \eqref{p} equals the maximum consumer surplus over all feasible reallocations. In order to emphasize its dependence on the current allocation $x^0$, we will denote it by $CS(x^0)$. 
Since $x=0$ is feasible in \eqref{p}, and since, by Lemma~\ref{lem:usc}, $D_i(0)=0$ under Assumption~\ref{ass:g}, we have $CS(x^0)\ge 0$. If $CS(x^0)=0$, the market clearing problem \eqref{p} is solved by $x=0$, and we say that $x^0$ is a {\em double auction equilibrium}. In other words, double auction equilibria are allocations $x^0$ at which the market clears with zero trades and all agents have zero surplus\footnote{As the solution to the market clearing problem need not be unique, there may be other solutions besides $0$ at a double auction equilibrium $x^0$. In the two-asset setting of Section~\ref{sec:loda}, this would mean that horizontal parts of the supply and demand curves $s$ and $d$ overlap. The uniqueness or the lack of it has no effect on the conclusions of this paper.}.

It is natural to ask, how ``efficient'' are double auction equilibria. They are, after all, defined in terms of a specific market mechanism, the double auction. Much like the fundamental theorems of welfare economics relate Walrasian equilibria with Pareto efficient allocations (see e.g.\ \cite[Chapter~16]{mwg95} or \cite[Chapter~17]{var92}), the main result of this section, Theorem~\ref{thm:pareto} below, gives conditions under which double auction equilibria coincide with Pareto allocations. Recall that a feasible allocation $x$ is {\em Pareto efficient} if there does not exist another feasible allocation $x'$ such that
\[
u_i(x'_i)\ge u_i(x_i)
\]
for all agents $i\in I$ with strict inequality for at least one of them.

Lemma~\ref{lem:opt} allows for a quick proof of the fact that, under Assumption~\ref{ass:g}, Pareto allocations are double auction equilibria. We will prove the converse under the following.

\begin{assumption}\label{ass:0}
If $x_i$ is such that $u_i(x_i)>u_i(x_i')$ for some $x'_i\in\dom u_i$ then $x_i-\varepsilon g\in\dom u_i$ for small enough $\varepsilon>0$.
\end{assumption}

Assumption~\ref{ass:0} means that starting from a point which is not of lowest possible utility over $\dom u_i$, the agent can give away some positive amount of the numeraire portfolio without making his position infeasible. This clearly holds if the sets $\{x_i\in\reals^J\,|\, u_i(x_i)>u_i(x_i')\}$ with $x_i'\in\dom u_i$ are open, or if the effective domain
\[
\dom u_i:=\{x_i\in\reals^J\,|\,u_i(x_i)>-\infty\}
\]
of $u_i$ is an open set. Both conditions hold if $u_i$ is finite everywhere since concavity then implies that $u_i$ is continuous; see \cite[Corollary~10.1.1]{roc70a}.

\begin{theorem}\label{thm:pareto}
Under Assumption~\ref{ass:g}, Pareto allocations are double auction equilibria. The converse holds under Assumption~\ref{ass:0}.
\end{theorem}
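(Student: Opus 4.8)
The plan is to phrase everything through the reformulation in Lemma~\ref{lem:opt}. Recall that $CS(x^0)$ equals the optimum value of \eqref{opt}, that $(r,w)=(0,x^0)$ is always feasible there (so $CS(x^0)\ge 0$), and that $x^0$ is a double auction equilibrium exactly when this optimum value is $0$. Both implications then reduce to the same dichotomy: \emph{$x^0$ fails to be Pareto efficient if and only if \eqref{opt} admits a feasible pair $(r,w)$ with $r>0$}. I would prove the two directions of this dichotomy separately.

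For the first implication (Pareto allocations are equilibria) I would argue by contraposition: suppose $CS(x^0)>0$, so there is a feasible $(r,w)$ for \eqref{opt} with $r>0$. Redistribute the freed numeraire evenly by setting $x'_i:=w_i+\tfrac{r}{|I|}g$. Then $\sum_{i}x'_i=\sum_i w_i+rg=\sum_i x^0_i$, so $x'$ is a feasible allocation, and since $u_i(w_i)\ge u_i(x^0_i)>-\infty$ forces $w_i\in\dom u_i$, the strict monotonicity in Assumption~\ref{ass:g} gives $u_i(x'_i)=u_i(w_i+\tfrac{r}{|I|}g)>u_i(w_i)\ge u_i(x^0_i)$ for every $i$. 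Hence $x'$ strictly Pareto dominates $x^0$, so $x^0$ is not Pareto efficient. This uses only Assumption~\ref{ass:g}.

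For the converse (equilibria are Pareto, under Assumption~\ref{ass:0}) I would again contrapose: suppose $x^0$ is not Pareto efficient, witnessed by a feasible $x'$ with $u_i(x'_i)\ge u_i(x^0_i)$ for all $i$ and $u_{i_0}(x'_{i_0})>u_{i_0}(x^0_{i_0})$ for some $i_0$. The goal is to construct a feasible pair for \eqref{opt} with $r>0$ by having agent $i_0$ surrender a small amount $\varepsilon>0$ of the numeraire. Concretely, I would take $w_i:=x'_i$ for $i\neq i_0$, $w_{i_0}:=x'_{i_0}-\varepsilon g$ and $r:=\varepsilon$; the equality constraint holds automatically since $\sum_i w_i+\varepsilon g=\sum_i x'_i=\sum_i x^0_i$, and $u_i(w_i)\ge u_i(x^0_i)$ is immediate for $i\neq i_0$. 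The whole argument hinges on the single remaining inequality $u_{i_0}(x'_{i_0}-\varepsilon g)\ge u_{i_0}(x^0_{i_0})$ for all sufficiently small $\varepsilon>0$, which then yields $CS(x^0)\ge\varepsilon>0$, contradicting the equilibrium property.

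Establishing that last inequality is the main obstacle, and this is exactly where Assumption~\ref{ass:0} enters. Since $u_{i_0}(x'_{i_0})>u_{i_0}(x^0_{i_0})$ with $x^0_{i_0}\in\dom u_{i_0}$, Assumption~\ref{ass:0} guarantees $x'_{i_0}-\varepsilon g\in\dom u_{i_0}$ for all small $\varepsilon>0$, so the scalar function $\varphi(\varepsilon):=u_{i_0}(x'_{i_0}-\varepsilon g)$ is concave and finite on a right-neighborhood $[0,\delta)$ of $0$. A chord estimate for concave functions then gives $\liminf_{\varepsilon\downarrow 0}\varphi(\varepsilon)\ge\varphi(0)=u_{i_0}(x'_{i_0})$, while upper semicontinuity of $u_{i_0}$ gives $\limsup_{\varepsilon\downarrow 0}\varphi(\varepsilon)\le\varphi(0)$; hence $\varphi$ is right-continuous at $0$. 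Because $\varphi(0)>u_{i_0}(x^0_{i_0})$, right-continuity yields $\varphi(\varepsilon)\ge u_{i_0}(x^0_{i_0})$ for all small enough $\varepsilon>0$, completing the construction. Everything else is routine bookkeeping once this continuity-from-the-right step is in place.
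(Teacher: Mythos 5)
Your proof is correct and follows essentially the same route as the paper: both directions are run through the reformulation of Lemma~\ref{lem:opt}, with the first handled by redistributing the freed numeraire $rg$ (the paper gives it all to one agent, you split it evenly) and the converse by having the strictly better-off agent surrender $\varepsilon g$ under Assumption~\ref{ass:0}. Your concavity/chord argument establishing $u_{i_0}(x'_{i_0}-\varepsilon g)\ge u_{i_0}(x^0_{i_0})$ for small $\varepsilon$ is a welcome elaboration of a step the paper asserts without detail.
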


\begin{proof}
Assume that $x^0$ is not a double auction equilibrium. By Lemma~\ref{lem:opt}, there is an $r>0$ such that the constraints of problem \eqref{opt} are satisfied. We can then construct another feasible allocation by giving $rg$ to one of the agents. Under Assumption~\ref{ass:g}, this would lead to strict increase of the agent's utility so $x^0$ can't be Pareto. This proves the first claim.

On the other hand, if $x^0$ is not Pareto, there is a feasible allocation $\tilde x$ such that $u_i(\tilde x_i)\ge u_i(x^0_i)$ for all $i\in I$ and $u_{i'}(\tilde x_{i'})>u_i(x^0_{i'})$ for some $i'\in I$. Under Assumption~\ref{ass:0}, there is an $r>0$ such that $u_{i'}(\tilde x_{i'}-rg)\ge u_{i'}(x^0_{i'})$. Setting
\[
w_i=\begin{cases}
\tilde x_i &\text{for $i\ne i'$},\\
\tilde x_i-rg & \text{for $i=i'$},
\end{cases}
\]
we would then obtain a feasible solution to problem \eqref{opt} with strictly positive optimum value. Thus, by Lemma~\ref{lem:opt}, $x^0$ would not be a double auction equilibrium.
\end{proof}

Theorem~\ref{thm:pareto} gives immediate existence results for the existence of double auction equilibria: under Assumption~\ref{ass:g}, a sufficient condition is the existence of Pareto equilibria. Under Assumption~\ref{ass:0} this is also necessary. This should be compared with the more involved proofs and conditions for the existence of classical Walrasian equilibria; see \cite{jrw7} for some of the most general results on the topic.

The notion of double auction equilibrium is closely related also to the {\em valuation equilibrium} introduced in Debreu~\cite{deb54}. The connections rely on the following dual characterizations of double auction equilibria.

\begin{lemma}\label{lem:ve}
Under Assumptions~\ref{ass:g} and \ref{slater}, the following are equivalent
\begin{enumerate}
\item
  $x^0$ is a double auction equilibrium,
\item
  there is a price vector $p$ with $\partial D_i(0)\ni p$ for all $i\in I$.
\item
  there is a price vector $p$ with $g\cdot p=1$ and, for all $i\in I$,
  \[
  p\cdot w_i\ge p\cdot x^0_i
  \]
  for all $w_i$ with $u_i(w_i)\ge u_i(x^0_i)$.
\end{enumerate}
\end{lemma}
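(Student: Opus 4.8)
The plan is to establish the two equivalences $(1)\Leftrightarrow(2)$ and $(2)\Leftrightarrow(3)$ separately: the first is a direct reading of the optimality conditions already in hand, while the second is an unwinding of the definition \eqref{idp} of the indifference price. For $(1)\Leftrightarrow(2)$, recall that $x^0$ is a double auction equilibrium precisely when $\bar x=0$ solves \eqref{p}. Since the market clearing constraint \eqref{kkt2} holds trivially at $\bar x=0$, Theorem~\ref{kkt} (applicable under Assumption~\ref{slater}) says this is equivalent to the existence of a price vector $p$ with $\partial D_i(0)\ni p$ for every $i\in I$, which is exactly statement (2). Nothing further is needed for this equivalence.

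For $(2)\Rightarrow(3)$, I would first note that, by Lemma~\ref{lem:usc}, $D_i(0)=0$, so $\partial D_i(0)\ni p$ reads
\[
D_i(x_i)\le p\cdot x_i\qquad\forall x_i\in\reals^J.
\]
Testing this at $x_i=rg$ and using $D_i(rg)=r$ from Lemma~\ref{lem:usc} forces $r\le r(g\cdot p)$ for all $r\in\reals$, hence $g\cdot p=1$. Now fix $i$ and any $w_i$ with $u_i(w_i)\ge u_i(x^0_i)$. Taking $r=0$ in \eqref{idp} shows $D_i(w_i-x^0_i)\ge 0$, and combining this with the displayed inequality at $x_i=w_i-x^0_i$ gives $0\le D_i(w_i-x^0_i)\le p\cdot(w_i-x^0_i)$, i.e.\ $p\cdot w_i\ge p\cdot x^0_i$. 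This is (3).

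For $(3)\Rightarrow(2)$, I would reverse the previous step. Fix $x_i$ and any $r<D_i(x_i)$, and set $w_i:=x^0_i+x_i-rg$; the strict monotonicity in the direction $g$ (Assumption~\ref{ass:g}) makes the feasible set of $r$ in \eqref{idp} a downward half-line, so $u_i(w_i)\ge u_i(x^0_i)$. Applying (3) and using $g\cdot p=1$ then yields $p\cdot x_i\ge r$, and letting $r\upto D_i(x_i)$ gives $D_i(x_i)\le p\cdot x_i$ for every $x_i$, which together with $D_i(0)=0$ is precisely $\partial D_i(0)\ni p$. The one place demanding care—the main obstacle—is this last direction: one must verify that for every $r$ below the supremum $D_i(x_i)$ the corresponding $w_i$ is admissible in (3), and this is exactly where Assumption~\ref{ass:g} is used, guaranteeing that the set of $r$ with $u_i(x^0_i+x_i-rg)\ge u_i(x^0_i)$ is closed downward so that the supremum is approached through feasible points.
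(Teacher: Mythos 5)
Your proof is correct and follows essentially the same route as the paper: $(1)\Leftrightarrow(2)$ via Theorem~\ref{kkt} at $\bar x=0$, and $(2)\Leftrightarrow(3)$ by unwinding the definition \eqref{idp} through the substitution $w_i=x^0_i+x_i-rg$ together with $D_i(0)=0$ and the translation property from Lemma~\ref{lem:usc}. The only difference is presentational: the paper runs $(2)\Leftrightarrow(3)$ as a single chain of equivalences quantified over all feasible pairs $(w_i,r)$, which sidesteps the downward-closedness argument you correctly supply when approaching the supremum from below in $(3)\Rightarrow(2)$.
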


\begin{proof}
The equivalence of the first two follows by taking $\bar x=0$ in  Theorem~\ref{kkt}. By definition, $\partial D_i(0)\ni p$ means that
\begin{align*}
&D_i(x_i)\le D_i(0)+p\cdot x_i\quad\forall x_i\in\reals^J\\
  \iff\ &r\le D_i(0)+p\cdot x_i\quad\forall x_i\in\reals^J,\ \forall r\in\reals:\ u_i(x^0_i+x_i-r g)\ge u_i(x^0_i)\\
\iff\ &r\le D_i(0)+p\cdot(w_i-x^0_i+r g)\quad\forall w_i\in\reals^J,\ \forall r\in\reals:\ u_i(w_i)\ge u_i(x^0_i)\\
\iff\ & p\cdot g=1,\quad 0\le D_i(0)+p\cdot(w_i-x^0_i)\quad\forall w_i\in\reals^J:\ u_i(w_i)\ge u_i(x^0_i),
\end{align*}
where $D_i(0)=0$, by Lemma~\ref{lem:usc}. Thus, 2 is equivalent to 3.
\end{proof}

Combining Lemma~\ref{lem:ve} with Theorem~\ref{thm:pareto} gives the following.

\begin{corollary}\label{cor:ve}
Under Assumptions~\ref{ass:g} and \ref{slater}, each Pareto allocation $x$ has an associated price vector $p\in\reals^J$ such that $g\cdot p=1$ and, for all $i\in I$,
\[
p\cdot w_i\ge p\cdot x^0_i
\]
for all $w_i$ with $u_i(w_i)\ge u_i(x^0_i)$. Conversely, under Assumptions~\ref{ass:g}, \ref{slater} and \ref{ass:0}, the existence of such a price vector implies that $x$ is Pareto efficient.
\end{corollary}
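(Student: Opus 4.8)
The plan is to treat this as a pure chaining of two already-established results, with the notion of \emph{double auction equilibrium} serving as the intermediate pivot that the corollary eliminates. Theorem~\ref{thm:pareto} relates Pareto allocations to double auction equilibria, while Lemma~\ref{lem:ve} relates double auction equilibria to the existence of a separating price vector; the corollary is exactly what one obtains by composing these two equivalences. So the whole proof is bookkeeping, and the only care required is tracking which hypotheses feed into which half of the argument.

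For the forward direction I would start from a Pareto allocation, which I will identify with the base point $x^0$ appearing in the displayed inequality. By the first assertion of Theorem~\ref{thm:pareto}, valid under Assumption~\ref{ass:g}, this allocation is a double auction equilibrium, i.e.\ it satisfies condition~1 of Lemma~\ref{lem:ve}. Invoking the equivalence of conditions~1 and~3 of that lemma, which holds under Assumptions~\ref{ass:g} and~\ref{slater}, then yields directly a price vector $p$ with $g\cdot p=1$ and $p\cdot w_i\ge p\cdot x^0_i$ for every $w_i$ with $u_i(w_i)\ge u_i(x^0_i)$, which is precisely the asserted conclusion.

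For the converse I would run the same chain backwards. Given a price vector with the stated separation property, condition~3 of Lemma~\ref{lem:ve} holds, so the same equivalence (again under Assumptions~\ref{ass:g} and~\ref{slater}) shows $x^0$ is a double auction equilibrium; the converse half of Theorem~\ref{thm:pareto}, which additionally requires Assumption~\ref{ass:0}, upgrades this to Pareto efficiency. There is no genuine mathematical obstacle here: the only subtlety is the asymmetry in hypotheses, namely that the forward implication uses only Assumptions~\ref{ass:g} and~\ref{slater} whereas the converse additionally invokes Assumption~\ref{ass:0} through Theorem~\ref{thm:pareto}, and one should confirm that the allocation asserted to be Pareto is indeed the same $x^0$ anchoring the indifference constraints $u_i(w_i)\ge u_i(x^0_i)$.
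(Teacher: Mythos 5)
Your proposal is correct and is exactly the paper's argument: the corollary is obtained there by combining Theorem~\ref{thm:pareto} (Pareto $\leftrightarrow$ double auction equilibrium) with the equivalence of conditions~1 and~3 in Lemma~\ref{lem:ve}, with the same distribution of hypotheses between the two directions. Your observation that the $x$ in the statement should be read as the $x^0$ anchoring the constraints $u_i(w_i)\ge u_i(x^0_i)$ matches the paper's (slightly loose) notation.
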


The first implication of Corollary~\ref{cor:ve} was given as \cite[Theorem~2]{deb54} under under slightly different assumptions and without the normalization condition $g\cdot p=1$. The seemingly more involved separation argument used in \cite{deb54} is replaced here by Lemma~\ref{lem:ve} which in turn is based on Theorem~\ref{kkt}, the proof of which also relies on separation. The second part of Corollary~\ref{cor:ve} gives the following result given as Theorem~1 in \cite{deb54} under slightly different conditions.

\begin{corollary}\label{cor:ve2}
If Assumption~\ref{ass:0} holds and $x$ is a feasible allocation such that there is a price vector $p$ with $g\cdot p=1$ and
\[
u_i(w_i)\le u_i(x_i)
\]
for all $w_i$ with $p\cdot w_i\le p\cdot x_i$, then $x$ is Pareto efficient.
\end{corollary}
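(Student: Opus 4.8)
The statement is the utility-maximization characterization of Pareto optimality, and the plan is to reduce it to the expenditure-minimization characterization already furnished by the second part of Corollary~\ref{cor:ve}. That corollary certifies Pareto efficiency of $x$ from a price vector $p$ with $g\cdot p=1$ such that, for every $i\in I$, one has $p\cdot w_i\ge p\cdot x_i$ whenever $u_i(w_i)\ge u_i(x_i)$. The present hypothesis instead asserts that $u_i(w_i)\le u_i(x_i)$ whenever $p\cdot w_i\le p\cdot x_i$, i.e.\ that $x_i$ maximizes $u_i$ over the budget set $\{w_i\mid p\cdot w_i\le p\cdot x_i\}$. Thus the only genuine task is to show that, under the standing Assumption~\ref{ass:g}, this budget-maximization property implies the expenditure-minimization property; the desired conclusion is then read off directly from Corollary~\ref{cor:ve}.

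For the conversion, fix $i$ and any $w_i$ with $u_i(w_i)\ge u_i(x_i)$, where I may assume $x_i\in\dom u_i$ so that $w_i\in\dom u_i$ as well. For each $\varepsilon>0$, Assumption~\ref{ass:g} gives $u_i(w_i+\varepsilon g)>u_i(w_i)\ge u_i(x_i)$, so the contrapositive of the hypothesis yields $p\cdot(w_i+\varepsilon g)>p\cdot x_i$. Since $g\cdot p=1$, this reads $p\cdot w_i+\varepsilon>p\cdot x_i$, and letting $\varepsilon\downarrow 0$ gives $p\cdot w_i\ge p\cdot x_i$, which is exactly the expenditure-minimization inequality. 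This is precisely the point where strict monotonicity in the numeraire direction plays the role of local nonsatiation in the classical first welfare theorem, and I expect the passage from the strict inequality (valid for every positive $\varepsilon$) to the weak inequality in the limit to be the only delicate step of the argument.

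With the expenditure-minimization property established, the second part of Corollary~\ref{cor:ve} applies and yields Pareto efficiency of $x$. It is worth recording why Assumption~\ref{slater} is absent from the present hypotheses: the implication \emph{existence of a supporting price $\Rightarrow$ Pareto} rests on the equivalence of conditions~2 and~3 in Lemma~\ref{lem:ve}, which is proved directly without any constraint qualification, together with the supergradient inequality $\sum_{i\in I}D_i(y_i)\le\sum_{i\in I}[D_i(0)+p\cdot y_i]=0$ for every feasible trade $y$, showing that $y=0$ solves \eqref{p}, and finally the converse half of Theorem~\ref{thm:pareto}, which is exactly where Assumption~\ref{ass:0} enters. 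Assumption~\ref{slater}, needed only to produce multipliers in the opposite direction, is not used here.
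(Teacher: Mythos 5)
Your proof is correct and follows essentially the same route as the paper's: reduce the hypothesis to condition~3 of Lemma~\ref{lem:ve} (equivalently, the price condition in the second half of Corollary~\ref{cor:ve}) and then invoke the converse direction of Theorem~\ref{thm:pareto} via Assumption~\ref{ass:0}. You are in fact more careful than the paper, which asserts that the budget-maximization hypothesis ``clearly implies'' condition~3: your $\varepsilon$-perturbation along $g$ is exactly the local-nonsatiation argument needed to pass from the contrapositive (strict inequalities) to the weak inequality of condition~3, and your observation that only the constraint-qualification-free directions of Lemma~\ref{lem:ve} and Theorem~\ref{kkt} are used correctly explains why Assumption~\ref{slater} may be omitted from the corollary's hypotheses.
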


\begin{proof}
The given condition clearly implies condition 3 in Lemma~\ref{lem:ve} so $x$ is a double auction equilibrium. Theorem~\ref{thm:pareto} now implies that $x$ is Pareto.
\end{proof}

While the sufficient condition in Corollary~\ref{cor:ve2} always implies condition 3 in Lemma~\ref{lem:ve}, the remark on page 591 of \cite{deb54} gives conditions under which the conditions are equivalent. In the terminology of \cite{deb54}, point $x$ satisfying the sufficient condition, except for the normalization condition $g\cdot p=1$, in Corollary~\ref{cor:ve2} was said to be a {\em valuation equilibrium}. Clearly, the normalization condition can be achieved by scaling the price vector provided $g\cdot p>0$. 





\section{Convergence to efficient allocations}\label{sec:ra}

In general, an agent's indifference price function $D_i$ as defined by \eqref{idp} depends on her endowment $x^0_i$. After market clearing, her endowment is changed to $x^1_i = x^0_i + \bar x^0_i - (p\cdot\bar x^0_i)g$ so her demand may change too. There is no reason for the new allocation $x^1$ to be a double auction equilibrium, in general.

Assume now that the auction is repeated indefinitely and denote agent $i$'s position after the $t$th auction by $x_i^t$. That is,
\[
x^t_i := x^{t-1}_i+\bar x^t_i-(p^t\cdot \bar x_i^t)g,
\]
where $p^t$ is the market clearing price and $\bar x_i^t$ is agent $i$'s purchase in the $t$th auction. We will show that, under fairly general conditions, the surplus decreases to zero and all cluster points of the allocations are individually rational and Pareto efficient. This seems to be the first mathematical justification of the efficiency of the double auction mechanism. 

The following assumption, where $\uball(r)$ denotes the ball of radius $r$ in the commodity space $\reals^J$, is a slight strengthening of Assumption~\ref{ass:g}.

\begin{assumption}\label{ass:1}
For every $r>0$ there exists a $\delta_i>0$ such that 
\[
u_i(x_i+r g)\ge u_i(x_i) + \delta_ir\quad\forall x_i\in\uball(r).
\]
\end{assumption}

The concavity of $u_i$ implies, by \cite[Theorem~23.1]{roc70a}, that the difference quotient
\[
\frac{u_i(x_i+r g)-u_i(x_i)}{r}
\]
is nonincreasing in $r$ so the inequality in Assumption~\ref{ass:1} implies\[
u_i(x_i+r' g)\ge u_i(x_i) + \delta_ir'
\]
for any $r'\in(0,r]$.

\begin{lemma}
Assumption~\ref{ass:1} is implied by Assumption~\ref{ass:g} if $x_i+rg\in\inte\dom u_i$ for all $x_i\in\dom u_i$ and $r>0$.
\end{lemma}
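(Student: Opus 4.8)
The plan is to fix $r>0$ and produce a single $\delta_i>0$ that works uniformly over $\uball(r)$. Writing $\phi(x_i):=u_i(x_i+rg)-u_i(x_i)$, the required inequality is trivial whenever $u_i(x_i)=-\infty$, so it suffices to bound $\phi$ below on $K:=\uball(r)\cap\dom u_i$; concretely, I would show $m:=\inf_{x_i\in K}\phi(x_i)>0$ and then take $\delta_i:=m/r$. The point of fixing $r$ is that $\uball(r)$ is compact, so the whole argument reduces to a semicontinuity-plus-compactness statement, after which the nonincreasing difference quotient recorded after Assumption~\ref{ass:1} propagates the bound to all smaller step sizes.

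First I would extract regularity from the hypothesis. Since $x_i+rg\in\inte\dom u_i$ whenever $x_i\in\dom u_i$, and a concave function is continuous on the interior of its domain (\cite[Corollary~10.1.1]{roc70a}), the map $F(x_i):=u_i(x_i+rg)$ is real-valued and continuous on the open set $\inte\dom u_i-rg$, which contains $\dom u_i$. As $-u_i$ is lower semicontinuous, $\phi=F-u_i$ is lower semicontinuous there. I would then take a minimizing sequence $x_i^n\in K$ with $\phi(x_i^n)\to m$, pass to a subsequence $x_i^n\to\bar x_i\in\uball(r)$, and split on the location of $\bar x_i$. If $\bar x_i\in\dom u_i$, lower semicontinuity gives $m\ge\phi(\bar x_i)=u_i(\bar x_i+rg)-u_i(\bar x_i)>0$ by the strict monotonicity in Assumption~\ref{ass:g}. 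If $\bar x_i\notin\dom u_i$ but $\bar x_i+rg\in\inte\dom u_i$, then $u_i(x_i^n+rg)\to u_i(\bar x_i+rg)$ is finite while $u_i(x_i^n)\to-\infty$ by upper semicontinuity, so $\phi(x_i^n)\to+\infty$. In both cases $m>0$.

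The remaining, and only delicate, case is a minimizing limit with $\bar x_i+rg\in\bdry\dom u_i$, where $F$ is merely upper semicontinuous and lower semicontinuity of $\phi$ can fail. Here I would first pass to a half-step estimate: by concavity the difference quotient along $[x_i,x_i+rg]$ dominates that along $[x_i+\tfrac r2 g,x_i+rg]$, so $\phi(x_i)\ge 2\bigl(u_i(x_i+rg)-u_i(x_i+\tfrac r2 g)\bigr)=:2\chi(x_i)$, and now \emph{both} arguments of $\chi$ lie in $\inte\dom u_i$ for $x_i\in\dom u_i$, making $\chi$ genuinely continuous on $\dom u_i$. Repeating the compactness argument with $\chi$, any minimizing limit with $\bar x_i+\tfrac r2 g\in\inte\dom u_i$ already yields a positive value, so the sole obstruction is $\bar x_i+\tfrac r2 g\in\bdry\dom u_i$. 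The main obstacle is to rule this out: supporting $\dom u_i$ at this boundary point by a nonzero $q$, the recession property $\dom u_i+\reals_{++}g\subseteq\inte\dom u_i$ forces $q\cdot g=0$, so the supporting hyperplane is parallel to $g$ and the whole line $\bar x_i+\tfrac r2 g+\reals g$ lies in $\bdry\dom u_i$. I would resolve this by restricting $u_i$ to a two-dimensional slice spanned by $g$ and a direction pointing from $\bar x_i+\tfrac r2 g$ into $\inte\dom u_i$: on this slice the boundary point is approached transversally to $g$, and concavity together with the strict increase in the $g$-direction (Assumption~\ref{ass:g}) keeps the one-sided $g$-derivative of the interior approximants bounded away from $0$, whence $\chi$, and therefore $\phi$, is bounded below by a positive constant. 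This slice (equivalently, lineality) reduction is where I expect the real work to lie.
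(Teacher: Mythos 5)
Your opening reduction (compactness of $\uball(r)$, lower semicontinuity of $\phi$ on the open set $\inte\dom u_i-rg\supseteq\dom u_i$, and the two cases $\bar x_i\in\dom u_i$ and $\bar x_i\notin\dom u_i$ with $\bar x_i+rg\in\inte\dom u_i$) is correct and is in substance the paper's own argument, which runs the same semicontinuity-plus-compactness mechanism by contradiction. The genuine gap is in your final step, and it cannot be closed: the claim that, near a boundary point whose supporting hyperplane is parallel to $g$, concavity and the strict increase in the $g$-direction keep the one-sided $g$-derivative of the interior approximants bounded away from $0$ is false. Take $J=\{1,2\}$, $g=(0,1)$ and
\[
u_i(x,y):=\inf_{0<t\le 1}\Bigl\{ty+\frac{x}{t}+\ln t\Bigr\}.
\]
As an infimum of affine functions this is concave and upper semicontinuous; $\dom u_i=\{(x,y):x>0\}$ is open and invariant under adding $\reals_{++}g$, so the hypothesis of the lemma holds; and $u_i$ is strictly increasing in the direction $g$ because for $x>0$ the infimum is attained at some $t^*>0$, whence $u_i(x,y+s)\ge u_i(x,y)+t^*s>u_i(x,y)$. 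Yet for $0<x\le 1$ the minimizer for $y=0$ is $t^*=x$, giving $u_i(x,0)=1+\ln x$, while $u_i(x,1)\le x+\frac{x}{x}+\ln x=u_i(x,0)+x$; hence $0<u_i(x,1)-u_i(x,0)\le x\to 0$ along a sequence in $\uball(1)\cap\dom u_i$, and Assumption~\ref{ass:1} fails for $r=1$. The $g$-derivative here is exactly $t^*(x,y)\approx x$, which collapses to $0$ as the boundary line $\{x=0\}$ (parallel to $g$, normal $q=(1,0)$ with $q\cdot g=0$) is approached transversally --- precisely the configuration your slice reduction was supposed to handle.

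What you have actually isolated is that the statement needs a stronger hypothesis. The paper's proof takes a sequence along which the increment tends to zero, passes to a limit $x$, and then invokes ``$x+rg\in\inte\dom u_i$''; but the limit is only guaranteed to lie in $\cl\dom u_i$, and $\dom u_i+\reals_{++}g\subseteq\inte\dom u_i$ does not imply $\cl\dom u_i+\reals_{++}g\subseteq\inte\dom u_i$ (the half-plane $\{x>0\}$ already shows this). So the paper silently skips exactly the case you flagged as delicate, and the example above shows that case is not merely delicate but fatal. Your cases together with the half-step estimate do prove the lemma under the corrected hypothesis $x_i+rg\in\inte\dom u_i$ for all $x_i\in\cl\dom u_i$ (e.g.\ when $\dom u_i$ is closed, or all of $\reals^J$, or has no boundary ray in the direction $g$); the residual boundary case must be excluded by assumption rather than argued away.
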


\begin{proof}
Assume that Assumption~\ref{ass:1} fails for some $r>0$. This means that there is a convergent sequence $x^\nu\to x\in\uball(r)$ such that
\[
\limsup_{\nu\to\infty}\{u_i(x^\nu_i+rg)-u_i(x^\nu_i)\}=0.
\]
By upper semicontinuity of $u_i$, 
\begin{align*}
  0&\ge\limsup_{\nu\to\infty}u_i(x^\nu_i+rg)-\limsup_{\nu\to\infty}u_i(x^\nu_i)\\
  &\ge\limsup_{\nu\to\infty}u_i(x^\nu_i+rg)-u_i(x_i).
\end{align*}
Concavity of $u_i$ implies its continuity on $\inte\dom u_i$ so if $x_i+rg\in\inte\dom u_i$, the last supremum equals $u_i(x_i+rg)$ contradicting Assumption~\ref{ass:g}.
\end{proof}

An allocation $\bar x$ is said to be {\em individually rational} if $u_i(\bar x_i)\ge u_i(x^0_i)$ for all $i\in I$.
  
\begin{theorem}\label{thm:conv}
Under Assumptions~\ref{ass:g}, \ref{ass:rec} and \ref{ass:1}, the sequence $(x^t)$ is bounded, $CS(x^t)$ decreases to zero with $t$,
\[
CS(x^t)\le\frac{1}{t}\sum_{i\in I}\frac{u_i(x_i^t)-u_i(x_i^0)}{\delta_i}
\]
and the cluster points $\bar x$ of $(x^t)_{t=0}^\infty$ are double auction equilibria and individually rational. In particular, $\bar x$ are Pareto efficient under Assumption~\ref{ass:0}.
\end{theorem}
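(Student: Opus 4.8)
The plan is to exploit two structural facts that hold at every round: the total endowment is conserved, $\sum_{i\in I}x^t_i=\sum_{i\in I}x^0_i=:W$ (this is \eqref{eq:mcc}), and each agent's utility is nondecreasing, $u_i(x^t_i)\ge u_i(x^{t-1}_i)$ (Remark~\ref{rem:paretoimp}). Consequently the whole trajectory stays in $A:=\{x\mid \sum_{i\in I}x_i=W,\ u_i(x_i)\ge u_i(x^0_i)\ \forall i\}$, which is closed by upper semicontinuity of the $u_i$ and bounded because Assumption~\ref{ass:rec} forces its recession cone to be trivial — the computation already carried out in the proof of Theorem~\ref{thm:exist}. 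This yields the asserted boundedness of $(x^t)$ and the existence of cluster points. Throughout I use the decomposition from Remark~\ref{rem:paretoimp}, $x^t_i=y^t_i+cs^t_i\,g$ with $y^t_i:=x^{t-1}_i+\bar x^t_i-D_i(\bar x^t_i)g$ satisfying $u_i(y^t_i)\ge u_i(x^{t-1}_i)$ and per-agent surplus $cs^t_i:=D_i(\bar x^t_i)-p^t\cdot\bar x^t_i\ge 0$; summing over $i$ and using $\sum_{i\in I}\bar x^t_i=0$ gives $\sum_{i\in I}cs^t_i=CS(x^{t-1})$ (Remark~\ref{rem:cs}).

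The heart of the argument is the per-round estimate $u_i(x^t_i)-u_i(x^{t-1}_i)\ge \delta_i\,cs^t_i$. Since $u_i(y^t_i)\ge u_i(x^{t-1}_i)$, it suffices to bound $u_i(y^t_i+cs^t_i g)-u_i(y^t_i)$ below using Assumption~\ref{ass:1}. The obstacle is that the constant $\delta_i$ there depends on the radius, so I must first produce a single radius $R$ accommodating every round. I would do this by bounding the surplus uniformly: a recession argument identical to the one for $A$ (and to the proof of Theorem~\ref{thm:exist}), applied to the reformulation \eqref{opt} of Lemma~\ref{lem:opt}, shows $\sup_{\xi\in A}CS(\xi)=:\bar C<\infty$. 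Then $cs^t_i\le CS(x^{t-1})\le\bar C$, while $\hat y^t_i:=y^t_i+\tfrac{CS(x^{t-1})}{|I|}g\in A$ is bounded (here Assumption~\ref{ass:g} gives $u_i(\hat y^t_i)\ge u_i(y^t_i)$), so $y^t_i=\hat y^t_i-\tfrac{CS(x^{t-1})}{|I|}g$ lies in a fixed ball. Choosing $R$ to dominate these bounds and taking $\delta_i=\delta_i(R)$, the monotonicity of the difference quotient noted after Assumption~\ref{ass:1} gives $u_i(y^t_i+cs^t_i g)-u_i(y^t_i)\ge\delta_i\,cs^t_i$, since $y^t_i\in\uball(R)$ and $cs^t_i\le R$.

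Next I would show that $CS(x^t)$ is nonincreasing, which is where \eqref{opt} pays off again. Because $\sum_{i\in I}x^t_i$ equals the same constant $W$ at every round, the endowment enters \eqref{opt} only through the utility floors $u_i(\cdot)\ge u_i(x^t_i)$; since these floors only rise with $t$, the feasible region of \eqref{opt} shrinks and its optimal value $CS(x^t)$ cannot increase. Summing the per-round estimate and telescoping then gives $\sum_{i\in I}\delta_i^{-1}\bigl(u_i(x^t_i)-u_i(x^0_i)\bigr)\ge\sum_{s=0}^{t-1}CS(x^s)\ge t\,CS(x^{t-1})\ge t\,CS(x^t)$, which is exactly the claimed rate; since the left-hand numerators are bounded (the $u_i$ are bounded above on the compact set $A$) the bound forces $CS(x^t)\downarrow 0$.

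Finally, for a cluster point $\bar x=\lim_k x^{t_k}$: individual rationality follows from $u_i(\bar x_i)\ge\limsup_k u_i(x^{t_k}_i)\ge u_i(x^0_i)$ by upper semicontinuity. To see that $\bar x$ is a double auction equilibrium, suppose $CS(\bar x)>0$; then \eqref{opt} supplies $r>0$ and $w$ with $\sum_{i\in I}w_i+rg=\sum_{i\in I}\bar x_i=W$ and $u_i(w_i)\ge u_i(\bar x_i)$. Since $u_i(x^t_i)$ is nondecreasing with limit at most $u_i(\bar x_i)$ (again upper semicontinuity), one has $u_i(w_i)\ge u_i(\bar x_i)\ge u_i(x^{t_k}_i)$, so the \emph{same} pair $(r,w)$ is feasible for $CS(x^{t_k})$ and $CS(x^{t_k})\ge r>0$, contradicting $CS(x^t)\to 0$. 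Thus $CS(\bar x)=0$, and under Assumption~\ref{ass:0} Theorem~\ref{thm:pareto} upgrades this to Pareto efficiency. I expect the main obstacle to be the uniform surplus and auxiliary bounds of the second paragraph, since every quantitative step depends on committing to one radius $R$, and hence one $\delta_i$, valid for the entire trajectory.
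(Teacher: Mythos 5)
Your proof is correct and follows essentially the same route as the paper: boundedness via the recession cone of the individually-rational feasible set, the decomposition $x^t_i=y^t_i+cs^t_i g$ with the per-round estimate $u_i(x^t_i)-u_i(x^{t-1}_i)\ge\delta_i cs^t_i$ from Assumption~\ref{ass:1}, monotonicity of $CS$ via the shrinking feasible region of \eqref{opt}, telescoping, and upper semicontinuity at cluster points. Your second paragraph is in fact slightly more careful than the paper, which simply asserts a single radius $r$ with $x^t_i\in\uball(r)$ and $CS_i(x^t)\le r$ without explicitly verifying that the intermediate points $y^t_i$ also lie in the relevant ball.
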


\begin{proof}
Since all iterates are feasible and, by Remark~\ref{rem:paretoimp}, individually rational the sequence is contained in the set
\[
C = \{x\in\reals^{I\times J}\mid \sum_{i\in I}x_i=\sum_{i\in I}x^0_i,\ u_i(x_i)\ge u_i(x_i^0)\ i\in I\}.
\]
By \cite[Theorem~8.4]{roc70a}, $C$ is bounded if and only if its recession cone
\[
C^\infty := \{x\in\reals^{I\times J}\mid \bar x+\alpha x\in C\ \forall\bar x\in C,\ \alpha>0\}
\]
only consists of the origin. By \cite[Corollary~8.3.3]{roc70a} and \cite[Theorem~8.7]{roc70a},
\[
C^\infty = \{x\in\reals^{I\times J}\mid \sum_{i\in I}x_i=0,\ u_i^\infty(x_i)\ge 0\ i\in I\}
\]
so, by Assumption~\ref{ass:rec}, $C$ is bounded. This proves the first claim.

By Lemma~\ref{lem:opt}, the total surplus $CS(x^{t-1})$ of the $t$th auction is the optimum value of
\begin{equation*}
\begin{aligned}
&\maximize\quad & & r\quad\ovr\ r\in\reals,\ w\in\reals^{I\times J}\\
  &\st\quad & & \sum_{i\in I}w_i + rg = \sum_{i\in I}x^{t-1}_i,\\
  & & & u_i(w_i) \ge u_i(x_i^{t-1})\quad i\in I.
\end{aligned}
\end{equation*}
As in \eqref{eq:mcc}, the market clearing condition implies
\[
\sum_{i\in I}x^{t-1}_i = \sum_{i\in I}x^0_i
\]
for all $t$ while, by Remark~\ref{rem:paretoimp}, $u_i(x_i^t)$ are nondecreasing in $t$. Thus, the constraints become more restrictive with $t$ so the optimum value $CS(x^t)$ is nonincreasing.

Denote agent $i$'s surplus in the $t$th auction by $CS_i(x^{t-1}):=D_i(\bar x_i^t)-p^t\cdot\bar x^t_i$ and let $r>0$ be such that $CS_i(x^t)\le r$ and $x_i^t\in\uball(r)$ for all $t$. Writing
\[
x^t_i=x^{t-1}_i+\bar x^t_i-D_i(\bar x^t_i)g + CS_i(x^{t-1})g,
\]
Assumption~\ref{ass:1} gives
\[
u_i(x_i^t)\ge u_i(x^{t-1}_i+\bar x^t_i-D_i(\bar x^t_i)g) + \delta_iCS_i(x^{t-1}) \ge u_i(x_i^{t-1}) + \delta_iCS_i(x^{t-1}).
\]
Adding up over iterations $s=1,\ldots,t$ gives
\[
\delta_i\sum_{s=0}^{t-1}CS_i(x^s)\le u_i(x_i^t)-u_i(x_i^0)
\]
and adding up over agents
\[
\sum_{s=0}^{t-1}CS(x^s) \le \sum_{i\in I}\frac{u_i(x_i^t)-u_i(x_i^0)}{\delta_i}.
\]
Since $CS(x^t)$ is nonincreasing in $t$, the left side is greater than $tCS(x^t)$ so
\[
CS(x^t)\le\frac{1}{t}\sum_{i\in I}\frac{u_i(x_i^t)-u_i(x_i^0)}{\delta_i}.
\]

If $\bar x$ is a cluster point of $(x^t)$, the upper semicontinuity of $u_i$ and the monotonicity of $u_i(x_i^t)$ in $t$ give 
\[
u_i(\bar x_i)\ge\limsup_{t\to\infty} u_i(x_i^t)=\sup_t u_i(x_i^t).
\]
Thus, by Lemma~\ref{lem:opt}, $CS(\bar x)\le CS(x^t)$ for all $t$ so $CS(\bar x)=0$.
\end{proof}

Theorem~\ref{thm:conv} restates the fundamental fact of welfare economics that ``competitive markets'' lead to efficient allocations. While the classical Walrasian model of the market leads to an equilibrium in a single trade, it assumes that the equilibrium prices are given exogenously or through a tatonnement process which is at odds with existing market mechanisms. Our market model gives a more realistic description of markets where prices are formed endogenously and the auction is repeated in order to reach equilibrium. Moreover, our result gives a worst-case bound on the speed of convergence which has been observed in extensive empirical studies ever since the pioneering works of Smith~\cite{smi62}.

\section{A numerical illustration}\label{sec:num}

This section presents a numerical study of the double auction process studied in the previous section. In the example, we will assume Cobb-Douglas utilities
\[
u_i(x_i)=\prod_{j\in J}x_{i,j}^{\alpha_{i,j}},
\]
where $\alpha_{i,j}$ are positive parameters with
\[
\sum_{j\in J}\alpha_{i,j}=1.
\]
For each agent $i\in I$, we generate the parameter vector $\alpha_i=(\alpha_{i,j})_{j\in J}$ randomly by drawing a vector from the uniform distribution over the unit cube and then scaling the vector to the unit simplex. The initial endowments $x^0_i$ are randomly drawn from the uniform distribution over the unit cube.

We assume that all prices are quoted in terms of cash which is assumed to be asset $0\in J$. This means that the numeraire portfolio $g$ is the unit vector with $g_0=1$ and $g_j=0$ for $j\in J\setminus\{0\}$. At each iteration of the double auction, we solve the market clearing problem \eqref{p} by solving the equivalent problem \eqref{opt} in Lemma~\ref{lem:opt} with the conic interior point solver of MOSEK~\cite{mosek}. The problem is formulated and communicated to MOSEK using Python~3.7 and CVXPY~\cite{cvxpy}.

Table~\ref{tab:num} illustrates the progress of the iterated double auctions  with $100$ agents and $5$ assets. The first column is the iteration counter, the second column is the total consumer surplus, the third column is the sum of logarithmic utilities, the fourth column is the inner product between the total endowment
\[
e=\sum_{i\in I} x^0_i
\]
and the market clearing prices $p^t$, the fifth column gives the Euclidean norm of the portfolio updates
\[
\Delta x^t :=x^t-x^{t-1}
\]
and the last column gives the market clearing prices $p^t$. The auction is iterated until the total consumer surplus falls below $0.001$. The computation time to generate the data, to set up the problem and to run the $17$ iterations was $1.20$ seconds using Intel Core i7-1065G7 processor with 15.2GB of RAM running Linux.

We repeat the experiment with the same endowments and utility functions but with numeraire portfolio $g=(1,\ldots,1)$ instead of the unit vector $g=(1,0,\ldots,0)$. The results are given in Table~\ref{tab:num2}. The convergence seems much faster this time. With the new numeraire and the Cobb-Douglass utilities, the constants $\delta_i$ in Assumption~\ref{ass:1} can be taken larger. According to Theorem~\ref{thm:conv}, this gives a better complexity bound which explains the speedup. It is also interesting to noticee that, with numeraire portfolio $g-(1,\ldots,1)$, the sum of the log-utilities is higher than in the previous example already after the first iteration. This illustrates the fact that, while the iterates are individually rational and the accumulation points are Pareto efficient, they are not unique.

\begin{table}\label{tab:num}
\centering
{\tiny
\begin{tabular}{l|c|c|c|ccccccc}
\toprule
$t$ &     CS &       $\sum_i\ln u_i$ &   $\|\Delta x^t\|$ &     &     &    $p_t$ &     &     \\
\midrule
1  & 38.691 & -68.341 & 8.705 & 1.000 & 0.159 & 0.135 & 0.132 & 0.132 \\
2  & 11.534 & -63.125 & 2.781 & 1.000 & 0.690 & 0.585 & 0.580 & 0.577 \\
3  &  4.266 & -61.498 & 1.085 & 1.000 & 0.843 & 0.711 & 0.710 & 0.709 \\
4  &  2.091 & -60.721 & 0.559 & 1.000 & 0.894 & 0.754 & 0.753 & 0.753 \\
5  &  1.159 & -60.310 & 0.329 & 1.000 & 0.916 & 0.774 & 0.772 & 0.772 \\
6  &  0.711 & -60.084 & 0.186 & 1.000 & 0.928 & 0.784 & 0.782 & 0.782 \\
7  &  0.486 & -59.951 & 0.119 & 1.000 & 0.933 & 0.790 & 0.787 & 0.786 \\
8  &  0.362 & -59.864 & 0.080 & 1.000 & 0.936 & 0.792 & 0.789 & 0.789 \\
9  &  0.289 & -59.796 & 0.063 & 1.000 & 0.938 & 0.794 & 0.791 & 0.791 \\
10 &  0.236 & -59.735 & 0.060 & 1.000 & 0.939 & 0.795 & 0.792 & 0.792 \\
11 &  0.187 & -59.682 & 0.058 & 1.000 & 0.940 & 0.796 & 0.794 & 0.793 \\
12 &  0.141 & -59.635 & 0.055 & 1.000 & 0.941 & 0.797 & 0.795 & 0.795 \\
13 &  0.098 & -59.595 & 0.051 & 1.000 & 0.942 & 0.798 & 0.796 & 0.796 \\
14 &  0.061 & -59.565 & 0.044 & 1.000 & 0.943 & 0.799 & 0.797 & 0.797 \\
15 &  0.032 & -59.545 & 0.035 & 1.000 & 0.944 & 0.800 & 0.797 & 0.797 \\
16 &  0.012 & -59.536 & 0.024 & 1.000 & 0.944 & 0.800 & 0.798 & 0.798 \\
17 &  0.002 & -59.534 & 0.011 & 1.000 & 0.944 & 0.800 & 0.798 & 0.798 \\
\end{tabular}
}
\caption{Consumer surplus, sum of log-utilities, Euclidean norm of the updated positions, and the market clearing prices along iterates when the numeraire portfolio is $g=(1,0,\ldots,0)$}
\end{table}

\begin{table}\label{tab:num2}
\centering
{\tiny
\begin{tabular}{l|c|c|c|ccccccc}
\toprule
$t$ &     CS &       $\sum_i\ln u_i$ &   $\|\Delta x^t\|$ &     &     &    $p_t$ &     &     \\
\midrule
1 & 15.894 & -58.257 & 7.278 & 0.235 & 0.224 & 0.187 & 0.177 & 0.177 \\
2 &  1.905 & -54.219 & 2.181 & 0.230 & 0.220 & 0.184 & 0.183 & 0.182 \\
3 &  0.206 & -53.811 & 0.537 & 0.231 & 0.219 & 0.184 & 0.184 & 0.182 \\
4 &  0.027 & -53.759 & 0.140 & 0.231 & 0.219 & 0.184 & 0.184 & 0.182 \\
5 &  0.003 & -53.754 & 0.031 & 0.231 & 0.219 & 0.184 & 0.184 & 0.182 \\
\end{tabular}
}
\caption{Consumer surplus, sum of log-utilities, Euclidean norm of the updated positions, and the market clearing prices along iterates when the numeraire portfolio is $g=(1,\ldots,1)$}
\end{table}


Theorem~\ref{thm:conv} and the numerical results in this section are concerned with double auction markets with sealed nonstrategic bidding among agents whose preferences remain fixed. Fixed preferences may be justified over short periods of time but as soon as relevant news and other information become available to the agents, their preferences may change. This may break the double auction equilibrium inducing trades and changes in market clearing prices. Our model could be used to describe such dynamics by allowing the utility functions to change over time. Their dependence on the agents' subjective information could be modeled e.g.\ by the construction in Example~\ref{ex:idu} where the agents' subjective probability measures $P_i$ could be updated with the arrival of new information. The supremum in the definition would amount to adding an extra set of variables in the formulation \eqref{opt} in Lemma~\ref{lem:opt} of the market clearing problem.

Another topic that deserves attention is the relaxation of the assumption of nonstrategic bidding. Even in the usual call auctions where all bids are sealed, one could consider agents who deviate from the competitive bidding described by the indifference rule \eqref{idp} in the hope of gaining strategic advantage as the auction gets repeated. One could approach such situations with game theoretic formulations and/or probabilistic modeling of the other bidders like in Myerson~\cite{mye81} in the optimal design of one-sided auctions. Such models would require different kind of techniques well beyond the scope of the present paper. Empirical research suggest, however, that the gains from strategic bidding may be modest; see e.g.\ \cite{fo95,cf96,zf7}.



\bibliographystyle{plain}
\bibliography{sp}

\end{document}